\newtheorem{theorem}{Theorem}
\newtheorem{proposition}[theorem]{Proposition}%
\newtheorem{lemma}[theorem]{Lemma}%
\newtheorem{corollary}[theorem]{Corollary}%
\newtheorem{remark}{Remark}%
\newtheorem{notation}{Notation}
\newcommand{\cancel}[1]
\newcommand{\complex}{\mathsf{K}}
\newcommand{\middling}{p}
\renewcommand\theHALG@line{\thealgorithm.\arabic{ALG@line}}
\algnewcommand{\LineComment}[1]{\State \(\triangleright\) #1}
\newcommand{\multiline}[1]{%
  \begin{tabularx}{\dimexpr\linewidth-\ALG@thistlm}[t]{@{}X@{}}
    #1
  \end{tabularx}
}
\DeclarePairedDelimiter{\floor}{\lfloor}{\rfloor}
\newcommand{\BCC}{\mathcal{B}}
\newcommand{\CCC}{\mathcal{C}}
\newcommand{\HCC}{\mathcal{H}}
\newcommand{\ECC}{\mathcal{E}}
\newcommand{\MCC}{\mathcal{M}}
\newcommand{\TCC}{\mathcal{T}}
\newcommand{\bbz}{\mathbb{Z}}
\newcommand{\sizeof}[1]{w(#1)}
\newcommand{\minHomCol}{\HCC}
\newcommand{\barminhom}{\overline{\HCC}}
\newcommand{\minCycleBasis}{\MCC}
\newcommand{\tightCycles}{\TCC}
\newcommand{\minBdryBasis}{\BCC}
\newcommand{\earliestbasis}[1]{\ECC(#1)}
\newcommand{\minTotalBasis}{\CCC}
\newcommand{\cycelem}{\zeta}
\newcommand{\bdryelem}{\eta}
\newcommand{\Hone}{\mathsf{H}_1}
\newcommand{\boldA}{\mathbf{A}}
\newcommand{\boldB}{\mathbf{B}}
\newcommand{\boldM}{\mathbf{M}}
\newcommand{\boldT}{\mathbf{T}}
\newcommand{\boldv}{\mathbf{v}}
\newcommand{\boldu}{\mathbf{u}}
\newcommand{\boldw}{\mathbf{w}}
\newcommand{\boldx}{\mathbf{x}}
\newcommand{\found}{\textnormal{found}}
\newcommand{\foundIndex}{\textnormal{foundIndex}}
\newcommand{\cp}{\mathsf{C}_{p}(K)}
\newcommand{\zp}{\mathsf{Z}_{p}(K)}
\newcommand{\bp}{\mathsf{B}_{p}(K)}
\newcommand{\hp}{\mathsf{H}_{p}(K)}
\newcommand{\hone}{\mathsf{H}_{1}(K)}
\newcommand{\hon}{\mathsf{H}_{1}}
\newcommand{\removed}[1]{}
\newcommand{\changed}[1]{#1}
\newcommand{\ztwo}{\mathbb{Z}_2}
\newcommand{\tightmatrix}[1]{\mathbf{T}(#1)}
\newcommand{\nnz}[1]{\mathsf{nnz}(#1)}
\newcommand{\newmatrix}{\mathbf{Y}}
\newcommand{\newmatrixtwo}{\mathbf{Z}}
\newcommand{\matrixit}[1]{\mathbf{#1}}
\newcommand{\fastloop}{FastLoop\xspace}
\DeclareMathOperator*{\im}{im}
\DeclareMathOperator*{\prob}{\textbf{Pr}}
\DeclareMathOperator*{\rk}{rk}
\def\BState{\State\hskip-\ALG@thistlm}
\newcommand{\amritendu}[1] {{\sf\textcolor{orange}{{Amritendu: #1}}}}
\begin{document}

\title{Fast Algorithms for Minimum Homology Basis}

\author*[1]{\fnm{Amritendu} \sur{Dhar}}\email{amritendud@iisc.ac.in}
\equalcont{These authors contributed equally to this work.}

\author[1]{\fnm{Vijay} \sur{Natarajan}}\email{vijayn@iisc.ac.in}

\author[2]{\fnm{Abhishek} \sur{Rathod}}\email{arathod@post.bgu.ac.il}
\equalcont{These authors contributed equally to this work.}

\affil*[1]{\orgdiv{Department of Computer Science and Automation}, \orgname{Indian Institute of Science}, \orgaddress{ \city{Bangalore}, \postcode{560012}, \state{Karnataka}, \country{India}}}

\affil[2]{\orgdiv{Department of Computer Science}, \orgname{Ben Gurion University}, \orgaddress{ \city{Be'er Sheva}, \postcode{8410501},  \country{Israel}}}


\abstract{
We study the problem of finding a minimum homology basis, that is, a \changed{lightest} set of cycles that generates the $1$-dimensional homology classes with $\bbz_2$ coefficients in a given simplicial complex $K$. This problem has been extensively studied in the last few years. 
 \changed{For general complexes, the current best deterministic algorithm, by Dey et al.~\cite{DeyLatest}, runs in $O(N m^{\omega-1} + n m g)$ time, where $N$ denotes the total number of simplices in  $K$, $m$ denotes the number of edges in $K$, $n$ denotes the number of vertices in $K$,  $g$ denotes the rank of the $1$-homology group of $K$, and $\omega$ denotes the exponent of matrix multiplication.
In this paper, we present three conceptually simple randomized algorithms that compute a minimum homology basis of a general simplicial complex $K$. The first algorithm runs in $\tilde{O}(m^\omega)$ time, the second algorithm runs in  $O(N m^{\omega-1})$ time and the third algorithm runs in $\tilde{O}(N^2 g + N m g{^2} + m g{^3})$ time which is nearly quadratic time when $g=O(1)$}.

We also study the problem of finding a minimum cycle basis in an undirected graph $G$ with $n$ vertices and $m$ edges. The best known algorithm for this problem runs in $O(m^\omega)$ time. Our algorithm, which has a simpler high-level description, but is slightly more expensive, runs in $\tilde{O}(m^\omega)$ time.  

We also provide a practical implementation of computing the minimum homology basis for general weighted complexes. The implementation is broadly based on the algorithmic ideas described in this paper, differing in its use of practical subroutines. Of these subroutines, the more costly step makes use of a parallel implementation, thus potentially addressing the issue of scale. We compare results against the currently known state of the art implementation (ShortLoop).
}

\keywords{Computational topology, Minimum homology basis, Minimum cycle basis, Matrix computations, Randomized algorithms}

\maketitle



\section{Introduction}

Minimum cycle bases in graphs have several applications, for instance, in
analysis of electrical networks, analysis of chemical and biological
pathways, periodic scheduling, surface reconstruction and graph drawing.
Also, algorithms from diverse application domains like electrical
circuit theory and structural engineering require cycle basis computation
as a preprocessing step. Cycle bases of small size offer a compact
description \changed{of representatives} that is advantageous from a mathematical  as well as
from an application viewpoint. For this reason, the problem of computing
a minimum cycle basis has received a lot of attention, both in its
general setting as well as in special classes of graphs such as planar
graphs, sparse graphs, dense graphs, network graphs, and so on. We
refer the reader to~\cite{KavithaSurvey} for a comprehensive survey.

In topological data analysis, \enquote{holes} of different dimensions in a geometric dataset
constitute \enquote{features} of the data. Algebraic topology offers
a rigorous language to formalize our intuitive picture of holes in
these geometric objects. More precisely, a basis for the first homology
group $\hon$ can be taken as a representative of the one-dimensional holes
in the geometric object. The advantages of using minimum homology
bases are twofold: firstly, one can bring geometry in picture by assigning
appropriate weights to edges, and secondly, smaller cycles are easier to
understand and analyze, especially visually. We focus solely on the bases
of the first homology group since the problem of computing a \changed{lightest} basis
 for higher homology groups with $\ztwo$ coefficients was shown to be NP-hard by Chen and Freedman~\cite{ChenFreedmanLocal}.

 \paragraph{Outline and Contributions}

In \Cref{sec:back}, we discuss the necessary preliminaries for cycle basis and homology basis computation. In \Cref{sec:mainone}, we describe a simple algorithm for computing \changed{ a } minimum cycle basis of a weighted graph.

In \Cref{sec:strucresult}, we prove a structural result relating minimum homology bases to minimum cycle bases. Specifically, we show that every minimum cycle basis \changed{of the} $1$-skeleton of a complex contains a minimum homology basis. In \Cref{sec:main}, we describe two randomized  algorithms \changed{(\Cref{alg:hombasis,alg:hombasistwo})}  for computing a minimum homology basis of a complex. In \Cref{sec:outsense}, we describe a third randomized algorithm \changed{(\Cref{alg:hombasisthree})} for the same problem that runs in nearly quadratic time when the first Betti number of the complex is a constant. All three algorithms use state-of-the-art black box matrix operations, and only the second  one (\Cref{alg:hombasistwo}) uses the structural result proved in \Cref{sec:strucresult}. \changed{\Cref{alg:hombasis} computes the column rank profile of a matrix consisting of tight cycles appended to the boundary matrix of the complex, whereas \Cref{alg:hombasistwo} computes the column rank profile of a matrix consisting of a minimum cycle basis of the $1$-skeleton of the complex appended to the boundary matrix of the complex.
\Cref{alg:hombasisthree} builds a  matrix $\boldB$ containing the minimum homology basis by iteratively finding a lexicographically smallest cycle from the set of tight cyles that is linearly independent of the current set of cycles stored in $\boldB$ by using a randomized binary search.}

To demonstrate that the ideas in this work have practical relevance, we provide an implementation of an algorithm based on  \Cref{alg:hombasistwo}. The implemented algorithm for computing minimum homology basis differs from  \Cref{alg:hombasistwo} in the use of matrix operations for computational efficiency. In particular, unlike \Cref{alg:hombasistwo}, it uses matrix reduction algorithm from the PHAT library~\cite{PHAT}. The details of the implementation can be found in \Cref{sec:implement}. Finally, in \Cref{sec:experiment}, we describe experiments on real-world datasets as well as random complexes. We show that our implementation \fastloop\footnote{\url{https://bitbucket.org/vgl_iisc/fastloop/}} consistently outperforms the state-of-the-art homology basis computation, ShortLoop~\cite{Shortloop, Shortloop_Paper}). 

Finally, we remark that this paper is the full and extended version of the conference version \changed{published in \cite{rathodminhom}.}

\section{Background and Preliminaries} \label{sec:back}

\subsection{Cycle Basis} \label{sec:cyclebasis}
Let $G = (V,E)$ be a connected graph. \changed{Throughout this paper, in context of graphs, we use $n$ to denote the number of vertices $|V|$ and $m$ to denote the number of edges $|E|$}. A subgraph of $G$ that has even degree for each vertex is called a \emph{cycle} of $G$. A cycle is called \emph{elementary} if the set of edges form a connected subgraph in which each vertex has degree $2$. We associate an incidence vector $C$, indexed on $E$, to each cycle, so that $C_e = 1$ if $e$ is an edge of the cycle, and $C_e = 0$ otherwise. The set of incidence vectors of cycles forms a vector space over $\ztwo$, called the \emph{cycle space} of $G$. It is a well-known fact that for a connected graph $G$, the cycle space is of dimension \changed{$m - n + 1$}. Throughout, we use $\nu$ to denote the dimension of the cycle space of a graph.
 A basis of the cycle space, that is, a maximal linearly independent set of cycles is called a \emph{cycle basis}. 

Suppose that the edges of $G$ have non-negative weights. Then, the weight of a cycle is the sum of the weights of its edges, and the weight of a cycle basis is the sum of the weights of the basis elements. The problem of computing a cycle basis of minimum weight \changed{(or lightest cycle basis)} is called the \emph{minimum cycle basis} problem. Since we assume all edge weights to be non-negative, there always exists a minimum cycle basis of elementary cycles, allowing us to focus on minimum cycle basis comprising entirely of elementary cycles.

\changed{Moreover, we define the \emph{length of a cycle} to be the number of edges in the cycle, and the \emph{length of a cycle basis} to be the sum of lengths of the cycles of the basis elements.}

\changed{An elementary cycle} $C$ is \emph{tight} if it contains a lightest path between every pair of points in $C$. We denote the set of all tight cycles in the graph by $\tightCycles$. Tight cycles are sometimes also referred to as \emph{isometric} cycles~\cite{AmaldiESA,KavithaSurvey}. Tight cycles play an important role in designing algorithms for minimum cycle bases, owing to the following theorem by Horton.

\begin{theorem}[Horton~\cite{Horton}] \label{thm:horton} A minimum cycle basis $\minCycleBasis$ consists only of tight cycles.
\end{theorem}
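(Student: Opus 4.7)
The plan is a cycle-exchange argument by contradiction. Suppose some minimum cycle basis $\minCycleBasis$ contains a cycle $C$ that is not tight. By definition there exist vertices $u,v$ lying on $C$ and a shortest $u$-$v$ path $P$ in $G$ whose length is strictly smaller than both of the two $u$-$v$ arcs of $C$. Write $C=P_1+P_2$ where $P_1,P_2$ are those two arcs (viewed as $\ztwo$-incidence vectors), and form
\[
C_1 := P + P_1, \qquad C_2 := P + P_2.
\]
Each $C_i$ is a sum of two $u$-$v$ paths, hence an element of the cycle space. By construction $C = C_1 + C_2$, and the strict inequality $w(P) < \min(w(P_1), w(P_2))$ gives $w(C_1) < w(C)$ and $w(C_2) < w(C)$.

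Next I would invoke a standard basis-exchange argument. Since $\minCycleBasis$ is a basis and $C = C_1+C_2$, the cycles $C_1, C_2$ cannot both lie in the span of $\minCycleBasis \setminus \{C\}$, for otherwise $C$ itself would. Pick $i$ so that $C_i$ is independent of $\minCycleBasis \setminus \{C\}$; then $(\minCycleBasis\setminus\{C\})\cup\{C_i\}$ is again a cycle basis, but of strictly smaller total weight, contradicting the minimality of $\minCycleBasis$.

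The one step that needs extra care, and which I expect to be the main obstacle, is that $C_i$ need not be an elementary cycle: if $P$ shares edges with $C$, the sum collapses and only gives an arbitrary element of the cycle space. The remedy is the standard fact that every $\ztwo$-cycle decomposes as an edge-disjoint sum of elementary cycles whose weights sum to $w(C_i)$. At least one elementary summand must be independent of $\minCycleBasis\setminus\{C\}$ (otherwise the whole sum, and with it $C_i$, would fall into that span, and then so would $C$ via $C = C_1 + C_2$). Swapping $C$ for that elementary summand still produces a basis of smaller weight, closing the contradiction. Once this elementary-decomposition subtlety is handled, the rest is just bookkeeping with the triangle inequality and linear algebra over $\ztwo$.
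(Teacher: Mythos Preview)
The paper does not prove this theorem at all; it is merely stated with a citation to Horton~\cite{Horton} and used as a black box. There is therefore no ``paper's own proof'' to compare against.

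That said, your argument is the standard one and is essentially correct. One small remark: you phrase the non-tightness condition as ``$w(P) < \min(w(P_1), w(P_2))$'', which is exactly what is needed, and from it $w(C_i) \le w(P)+w(P_i) < w(P_{3-i})+w(P_i) = w(C)$ follows for both $i$. The elementary-decomposition patch you describe is also the right fix (and is only needed if one insists on basis elements being elementary cycles; for the bare contradiction with minimality of total weight, any element of the cycle space suffices for the swap). So your proof is sound; it simply does not correspond to anything appearing in this paper.
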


A key structural property about minimum cycle bases was proved by de Pina.

\begin{theorem}[de Pina~\cite{dePina}] \label{thm:depina}
Cycles $C_{1}\dots,C_{\nu}$ form a minimum cycle basis if \changed{there exists $m$-dimensional} vectors $S_{1},\dots,S_{\nu}$ \changed{that satisfy the following three conditions} for all $i$, $1\leq i\leq\nu$. 
\begin{description}
\item[Prefix Orthogonality:] $\langle \changed{C_{k}},S_{i}\rangle = 0$ for all $1\leq \changed{k}< i$. 

\item[Non-Orthogonality:] $\langle C_{i},S_{i}\rangle=1$.

\item[Shortness:] $C_{i}$ is a minimum weight cycle in $\mathcal{T}$ with $\langle C_{i},S_{i}\rangle=1$.

\end{description}
\end{theorem}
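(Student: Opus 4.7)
The plan is to proceed in two phases: (i) verify that $\{C_1, \dots, C_\nu\}$ is a basis of the cycle space, and then (ii) show its total weight cannot exceed that of a minimum cycle basis furnished by Horton's theorem.

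For (i), I would exploit the triangular structure hidden in the conditions: if $\sum_{i \in I} C_i = 0$ for some nonempty $I$, pairing with $S_k$ for $k = \max I$ kills every term with index below $k$ by Prefix Orthogonality, while contributing $\langle C_k, S_k \rangle = 1$ by Non-Orthogonality, a contradiction. Since the cycle space has dimension $\nu$, the cycles must form a basis. The same unitriangular matrix $\bigl(\langle C_j, S_i \rangle\bigr)_{i,j}$ simultaneously certifies that the restricted functionals $S_1|_Z, \dots, S_\nu|_Z$ on the cycle space $Z$ are linearly independent; this independence is the key linear-algebraic ingredient for phase~(ii).

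For (ii), by Horton's theorem (\Cref{thm:horton}) I fix a minimum cycle basis $\{B_1, \dots, B_\nu\} \subseteq \tightCycles$, and after sorting both collections independently in non-decreasing weight the goal reduces to proving $w(C_{(k)}) \leq w(B_{(k)})$ for every $k$; summing then yields $\sum_i w(C_i) \leq \sum_j w(B_j)$, which combined with phase~(i) and the minimality of $\{B_j\}$ upgrades $\{C_i\}$ to a minimum cycle basis. Suppose for contradiction some $k$ violates the pointwise inequality, and set $I = \{\, i : w(C_i) \geq w(C_{(k)}) \,\}$, so $|I| \geq \nu - k + 1$. For every pair $(i,j)$ with $i \in I$ and $j \leq k$, the Shortness condition forces $\langle B_{(j)}, S_i \rangle = 0$: otherwise the tight cycle $B_{(j)}$, of weight strictly below $w(C_i)$, would beat $C_i$ in its defining minimization. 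Consequently the linearly independent functionals $\{S_i|_Z : i \in I\}$ all annihilate the $k$-dimensional subspace $W = \mathrm{span}(B_{(1)}, \dots, B_{(k)}) \subseteq Z$, placing them in an annihilator of dimension $\nu - k$. This yields $|I| \leq \nu - k$, contradicting $|I| \geq \nu - k + 1$.

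The chief obstacle is bridging the asymmetric, index-dependent hypotheses to a symmetric global weight comparison with an MCB. The trick is to use Shortness contrapositively to force a large block of inner products $\langle B_{(j)}, S_i \rangle$ to vanish, and then to leverage the phase-(i) independence of the $S_i|_Z$ to translate the resulting dimension mismatch into a contradiction.
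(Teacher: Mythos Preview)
The paper does not prove this theorem; it is quoted from de~Pina and merely stated as background for \Cref{alg:depina,alg:kavitha}. So there is no in-paper proof to compare your attempt against.

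Your argument itself is sound and is essentially the standard proof. Phase~(i) is correct: the matrix $(\langle C_j,S_i\rangle)_{i,j}$ is unit upper triangular, so both the $C_i$ and the restricted functionals $S_i|_Z$ are independent. Phase~(ii) is a clean counting/annihilator argument: if $w(C_{(k)})>w(B_{(k)})$ for some $k$, then Shortness forces $\langle B_{(j)},S_i\rangle=0$ for all $j\le k$ and all $i$ in the set $I=\{i:w(C_i)\ge w(C_{(k)})\}$ of size at least $\nu-k+1$, placing $\nu-k+1$ independent functionals in a $(\nu-k)$-dimensional annihilator. One cosmetic remark: you are (correctly) reading Prefix Orthogonality as $\langle C_j,S_i\rangle=0$ for $j<i$; the printed condition ``$1\le j\le i$'' is an evident typo, since it would directly contradict Non-Orthogonality at $j=i$.
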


The vectors  $S_{1},\dots,S_{\nu}$ in~\Cref{thm:depina} are called \emph{support vectors}.
The recent line of algorithmic work~\cite{dePina,KavithaOld,KavithaNew,Mehlhorn,AmaldiESA} on the minimum cycle basis  problem rely on~\Cref{thm:depina}.
In fact, these algorithms may all be seen as refinements of the algorithm by de Pina, see \Cref{alg:depina}.

\begin{algorithm}[H]
\caption{De Pina's Algorithm for computing a minimum cycle basis}\label{alg:depina}
\begin{algorithmic}[1]
\vspace{1.5mm}

\State{Initialize $S_i$ to the $i$-th unit vector $e_i$ for $1 \leq i \leq \nu $ }
\For{$i \gets 1, \dots,\nu$}
	\State{Compute a minimum weight cycle $C_i$ with $\langle C_i, S_i \rangle = 1$. }
	\For{$j \gets i+1, \dots,\nu$}
		\State{$S_j = S_j +  \langle C_i, S_j\rangle S_i$ }
	\EndFor
\EndFor

\State{$\textsc{Return } \{ C_1, \dots, C_\nu \}$.}

\end{algorithmic}
\end{algorithm}

 \Cref{alg:depina} works by inductively maintaining a set of support vectors $\{S_i\}$ so that the conditions of~\Cref{thm:depina} are satisfied when the algorithm terminates.
In particular, Lines 4 and 5 of the algorithm ensure that the set
of vectors $S_{j}$ for $j>i$ are orthogonal to vectors $C_{1},\dots,C_{i}$.
Updating the vectors $S_j$ as outlined in Lines~4~and~5 of \Cref{alg:depina} takes time $O(m^{3})$ time
in total. Using a divide and conquer procedure for maintaining $S_j$, Kavitha et al.~\cite{KavithaOld} improved the
cost of maintaining the support vectors to $O(m^{\omega})$ \changed{effectively improving the cost of computing minimum cycle basis from $O(m^\omega n)$ to $O(m^2 n + m n^2 \log n)$, see \Cref{alg:kavitha}}.

\begin{algorithm}[H]
\caption{Divide and conquer procedure for fast computation of support vectors by Kavitha et al.~\cite{KavithaOld}}\label{alg:kavitha}
\begin{algorithmic}[1]
\vspace{1.5mm}

\State{Initialize $S_i$ to the $i$-th unit vector $e_i$ for $1 \leq i \leq \nu $.}
\State{MinCycleBasis$(1, \nu)$.}
\Statex
\Procedure{MinCycleBasis}{$\ell, u$}

	\If{$\ell = u$}
		\State{Compute a minimum weight cycle $C_\ell$ with $\langle C_\ell, S_\ell \rangle = 1$.}
	\Else
		\State{$q \gets \floor{\Large \frac{(\ell+u)}{2}}$.}
		\State{MinCycleBasis$(\ell, q)$.}
		\State{$\matrixit{C} \gets [C_\ell, \dots, C_q]$.}
		\State{$\matrixit{W} \gets(\matrixit{C}^{T}[S_{\ell},\dots,S_{q}])^{-1}\matrixit{C}^{T}[S_{q+1},\dots,S_{u}]$.}
		\State{$[S_{q+1},\dots,S_{u}]\gets[S_{q+1},\dots,S_{u}] + [S_{\ell},\dots.,S_{q}]\matrixit{W}$.}
		\State{MinCycleBasis$(q+1, u)$.}
	\EndIf

\EndProcedure


\State{$\textsc{Return } \{ C_1, \dots, C_\nu \}$.}

\end{algorithmic}
\end{algorithm}

\begin{lemma}[Lemma 5.6 in \cite{KavithaSurvey}]
The total number of arithmetic operations performed in lines 9 to 11 of \Cref{alg:kavitha} is $O(m^\omega)$. That is, the support vectors satisfying conditions of \Cref{thm:depina} can be maintained in $O(m^\omega)$ time.
\end{lemma}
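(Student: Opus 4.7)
The plan is to bound the arithmetic cost of a single recursive call (lines 9--11) as a function of the half-size $k = q - \ell + 1$, and then sum this over the entire recursion tree using the fact that $\omega > 2$.

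First I would analyze a single invocation of the body of the recursion. Let $k = q - \ell + 1$ and $r = u - q$, noting that the divide-and-conquer structure gives $r \leq k + 1$ so $r = \Theta(k)$. In line 9, stacking $C_\ell,\dots,C_q$ into a matrix $\matrixit{C}$ costs $O(mk)$ and is negligible. In line 10, the product $\matrixit{C}^{T}[S_{\ell},\dots,S_{q}]$ multiplies a $k\times m$ matrix by an $m \times k$ matrix. Viewing the inner dimension $m$ as $\lceil m/k\rceil$ blocks of width $k$ and applying $k\times k$ fast matrix multiplication block-by-block, this costs $O(m k^{\omega-1})$. Inverting the resulting $k \times k$ matrix costs $O(k^{\omega})$, and the same block-decomposition argument bounds $\matrixit{C}^{T}[S_{q+1},\dots,S_{u}]$ by $O(m k^{\omega-1})$ and the remaining $k \times k$ by $k \times r$ multiplication by $O(k^{\omega})$. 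Finally, line 11 multiplies the $m\times k$ matrix $[S_\ell,\dots,S_q]$ by the $k \times r$ matrix $\matrixit{W}$, again costing $O(m k^{\omega-1})$ by the same block argument. So the total cost of one invocation of lines 9--11 at half-size $k$ is $O(m k^{\omega-1})$.

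Next I would sum the per-call cost over the recursion tree. At depth $i$ there are $2^{i}$ invocations, each working with half-size $k_i = \Theta(\nu / 2^{i+1})$. The total cost at depth $i$ is therefore
\[
O\!\left( 2^{i}\cdot m\cdot (\nu/2^{i+1})^{\omega-1}\right) = O\!\left(m\,\nu^{\omega-1}\cdot 2^{i(2-\omega)}\right).
\]
Since $\omega > 2$, the exponent $2-\omega$ is negative, so the series $\sum_{i\geq 0} 2^{i(2-\omega)}$ converges and the total is dominated by the root, giving an overall bound of $O(m\,\nu^{\omega-1})$. Using $\nu \leq m$ yields $O(m^\omega)$, as claimed.

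The main obstacle is really just bookkeeping: ensuring that the rectangular products $k\times m$ by $m\times k$ (and their variants) are handled by the block-decomposition trick rather than na\"ively, so that each call is $O(mk^{\omega-1})$ rather than something weaker; once that per-call bound is in hand, the geometric-series summation over the recursion tree is immediate because $\omega > 2$.
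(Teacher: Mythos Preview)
Your argument is correct and is exactly the standard proof of this bound: cost $O(mk^{\omega-1})$ per call via block matrix multiplication on the rectangular products, then a geometric series over the recursion tree that converges because $\omega>2$. Note, however, that the present paper does not give its own proof of this lemma; it is quoted verbatim as Lemma~5.6 of the survey~\cite{KavithaSurvey}, and the proof there is precisely the one you have outlined.
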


Finally, in~\cite{AmaldiESA}, Amaldi et al. designed an $O(m^{\omega})$ time algorithm for computing a minimum cycle basis by improving the complexity of Line~5 of \Cref{alg:kavitha} to $o(m^\omega)$ (from $O(m^2 n)$ in~\cite{KavithaOld}), while using the $O(m^\omega)$ time divide-and-conquer template for maintaining the support vectors as presented in \Cref{alg:kavitha}.
The $o(m^\omega)$ procedure for \changed{Line~5} is achieved by performing a Monte Carlo binary search on the  set of tight cycles (sorted by weight) to find a minimum weight cycle $C_i$ that satisfies $\langle C_i, S_i \rangle = 1$.
An efficient binary search is made possible on account of the following key structural property about tight cycles.

\begin{theorem}[Amaldi et al.~\cite{AmaldiESA}] \label{thm:totallength}
\changed{The sum of lengths} of the tight cycles \changed{in a graph} is at most $n\nu$.
\end{theorem}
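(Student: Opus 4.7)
The plan is to double-count pairs $(v, C)$ with $v$ a vertex of the tight cycle $C$; each tight cycle $C$ contributes exactly $|C|$ such pairs, so
\[
\sum_{C \in \tightCycles} |C| \;=\; \sum_{v \in V} \#\{\,C \in \tightCycles : v \in C\,\}.
\]
It therefore suffices to prove that for each fixed vertex $v$ at most $\nu$ tight cycles contain $v$; summing over the $n$ vertices then yields the claimed bound of $n\nu$.

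To prove the per-vertex bound, I fix a shortest-path tree $T_v$ rooted at $v$, together with a rigid tie-breaking rule (for instance, lexicographic on edge sequences) that makes the shortest $v$-to-$u$ path unique for every $u$. Given any tight cycle $C \ni v$, I walk along $C$ from $v$ in both directions; the two arcs meet at the point antipodal to $v$ on $C$. Because $C$ is tight, each arc is a shortest $v$-to-endpoint path in $G$, and by uniqueness it must agree with the corresponding $T_v$-path. In the odd-length case the antipode is a single edge $e = (x,y) \in C$, and we set $e(C) := e$; in the even-length case the antipode is a vertex, and we take $e(C)$ to be the tie-broken choice among the two cycle edges incident to it. In either case, $C = T_v(v, x) \cup \{e(C)\} \cup T_v(y, v)$, so $C$ is completely determined by $e(C)$.

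Next I argue that $e(C)$ cannot be a tree edge of $T_v$: if $(x,y) \in T_v$ with, say, $y$ the parent of $x$, then $T_v(v, x)$ traverses $y$, which would force the two arcs of the simple cycle $C$ to share a vertex other than $v$ -- a contradiction. Thus $C \mapsto e(C)$ is an injection from tight cycles through $v$ into the $m - (n-1) = \nu$ non-tree edges of $T_v$, giving the per-vertex bound and hence the theorem.

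The principal subtlety I expect to encounter is setting up the tie-breaking rigidly enough that uniqueness of shortest paths from $v$ propagates through every tight cycle simultaneously (rather than merely per cycle); once the tie-breaking is fixed as a strict total order on paths and respected by $T_v$, the identification of $C$'s arcs with $T_v$-paths and the subtree-disjointness step become routine, and the even-length case is handled uniformly by committing a priori to one of the two antipodal edges.
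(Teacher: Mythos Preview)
The paper does not supply a proof of this theorem; it is quoted as a result of Amaldi et al.\ \cite{AmaldiESA}, so there is no in-paper argument to compare against. Your double-counting plan---show that each vertex $v$ lies on at most $\nu$ tight cycles by mapping each such cycle to its ``far'' non-tree edge in a shortest-path tree $T_v$---is precisely the argument Amaldi et al.\ use.

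The subtlety you flag at the end is, however, more than a technicality, and your proposed resolution does not close it. You fix $T_v$ via a total order on paths and then assert that the two arcs of any tight $C\ni v$ must coincide with $T_v$-paths ``by uniqueness.'' Under the paper's literal definition (``contains \emph{a} shortest path between every pair'') this fails: a tight cycle may use a shortest $v$--$x$ path different from the one selected by $T_v$, so two distinct tight cycles through $v$ can share the same antipodal edge and your map is not injective. Concretely, on vertices $v,a_1,a_2,x,y,b$ with unit-weight edges $va_1,\,va_2,\,a_1x,\,a_2x,\,xy,\,yb,\,bv$, all three elementary cycles---the $4$-cycle $v\,a_1\,x\,a_2$ and the two $5$-cycles $v\,a_i\,x\,y\,b$ for $i=1,2$---are tight in this sense, giving total length $14>12=n\nu$. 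What Amaldi et al.\ actually do is not merely tie-break $T_v$; they \emph{define} isometric cycles relative to a fixed global system of unique (lexicographically perturbed) shortest paths, so that a cycle is isometric only if it contains \emph{the} designated shortest path for every pair of its vertices. Under that stronger definition one of the two $5$-cycles above ceases to be tight, your identification of the arcs of $C$ with $T_v$-paths becomes valid, and the remainder of your argument goes through essentially as written.
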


Amaldi et al.~\cite{AmaldiESA} also show that there exists an $O(nm)$ algorithm to compute the set of all the tight cycles of an undirected graph $G$. See Sections~2~and~3 of \cite{AmaldiESA} for details about Amaldi et al.'s algorithm.

\subsection{Matrix operations}

The \emph{column rank profile} (respectively \emph{row rank profile}) of an $m\times n$
matrix $\boldA $ with rank $r$, is the lexicographically smallest sequence
of $r$ indices $[i_{1},i_{2}, \dots ,i_{r}]$ (respectively $[j_{1},j_{2},\dots ,j_{r}]$)
of linearly independent columns (respectively rows) of $\boldA$.  Suppose that $\{a_1,a_2,\dots, a_n\}$ represent the columns of $\boldA$. Then, following Busaryev et al.~\cite{Busaryev}, we define the \emph{earliest basis} of $\boldA$ as the set of columns $\earliestbasis{\boldA} = \{a_{i_{1}},a_{i_{2}}, \dots ,a_{i_{r}}\}$. \changed{Throughout, we use $\nnz{\boldA}$ to denote the number of nonzero entries in matrix $\boldA$.}

It is well-known that classical Gaussian elimination can be used to compute rank profile in $O(nmr)$ time.
The current state-of-the-art deterministic matrix rank profile algorithms run in $O(m n r^{\omega-2})$ time.

\begin{theorem}[\cite{JEANNEROD201346,Dumas13}] \label{thm:deterprofile} There is a deterministic $O(m n r^{\omega-2})$ time algorithm to compute the column rank profile  of an  $m\times n$
matrix $\boldA$.
\end{theorem}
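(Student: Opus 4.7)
My plan is to prove the statement by exhibiting a rank-revealing block recursive PLUQ decomposition that processes the columns of $A$ in their natural left-to-right order. The classical column-by-column Gaussian elimination processes one pivot at a time and costs $O(mnr)$; the speedup to $O(mnr^{\omega-2})$ comes from aggregating work into dense matrix products and triangular solves handled via fast matrix multiplication.

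First I would set up the recursion on the number of columns. Given $A$ of size $m \times n$, split $A = [A_1 \mid A_2]$ with $A_1$ consisting of the first $\lceil n/2 \rceil$ columns. Recursively compute a rank-revealing decomposition of $A_1$ which produces factors $P_1, L_1, U_1, Q_1$ and, as a byproduct, the column rank profile of $A_1$, of some rank $r_1 \leq r$. Next, apply the row permutation $P_1$ and the triangular inverse of $L_1$ to the block $A_2$, and then subtract off the contribution of the pivot rows, obtaining the Schur complement $A_2'$ of size $(m - r_1) \times (n - \lceil n/2 \rceil)$. This update consists of a constant number of multiplications and triangular solves on rectangular blocks whose inner dimension is $r_1$. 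Recursively computing the rank-revealing decomposition of $A_2'$ then yields the column rank profile of the remaining columns, with indices shifted to positions in $A_2$. Concatenating the two profiles gives the column rank profile of $A$, because the left-to-right split preserves the lexicographically first choice of independent columns.

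The step I expect to be most delicate is the cost analysis. Each update involves products of an $m \times r_1$ matrix by an $r_1 \times O(n)$ matrix and triangular inversions of $r_1 \times r_1$ blocks; by cutting these products into square panels of side $r_1$ one shows each costs $O(mn\, r_1^{\omega-2})$ time via fast matrix multiplication. The key combinatorial ingredient is that the ranks $r_1, r_2, \ldots$ appearing in sibling subproblems at each level of the recursion sum to at most $r$, so a superadditivity argument on $x \mapsto x^{\omega-2}$ (valid because $\omega \geq 2$) bounds the total work by $O(mnr^{\omega-2})$. A second, less routine point is verifying that the pivoting strategy preserves the \emph{lexicographically smallest} list of independent columns, not merely some basis; this is ensured by always selecting the left-most available column as a pivot in both the base case and in the combination step, which is exactly what the recursive split on contiguous column ranges guarantees.
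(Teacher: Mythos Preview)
The paper does not prove this theorem; it is stated as a cited result from \cite{JEANNEROD201346,Dumas13} and used as a black box. There is therefore no proof in the paper to compare your proposal against. Your sketch is essentially the approach of the cited references: a recursive PLUQ factorisation that splits on columns, updates the right block by a Schur complement, and amortises the rectangular products via fast matrix multiplication, with the left-to-right column split guaranteeing that the pivot columns returned are exactly the lexicographically smallest independent set.

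One small correction to your cost analysis: the map $x \mapsto x^{\omega-2}$ is \emph{concave} (hence subadditive) for $\omega \in (2,3]$, not superadditive. The per-level bound you need is obtained by Jensen's inequality: with $2^k$ subproblems of column width $n/2^k$ and ranks summing to at most $r$, one gets $\sum_i r_i^{\omega-2} \le 2^k (r/2^k)^{\omega-2}$, so the level-$k$ cost is $O(m n r^{\omega-2} \cdot 2^{k(2-\omega)})$, and the geometric sum over levels is $O(1)$ since $\omega>2$. The conclusion $O(mnr^{\omega-2})$ is correct; only the name of the convexity direction was flipped.
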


In case of randomized algorithms, Cheung, Kwok and Lau~\cite{Cheung} presented a breakthrough Monte Carlo algorithm for rank computation that runs in $(\nnz{\boldA}+r^{\omega})^{1+o(1)}$ time, where $o(1)$ in the exponent captures some missing multiplicative $\log n$ and $\log m$ factors. Equivalently, the complexity for Cheung et al.'s algorithm can also be written as $\tilde{O}(\nnz{\boldA}+r^{\omega})$. The notation $\tilde{O}(\cdot)$ is often used in literature to hide small poly logarithmic factors in time bounds.
While the algorithm by Cheung et al. also computes $r$ linearly independent columns of $\boldA$, the columns may not correspond to the column rank profile. However, building upon Cheung et al.'s work, Storjohann and Yang established the following result.

\begin{theorem}[Storjohann and Yang~\cite{Storjohann14,Storjohann15,Yang2014AlgorithmsFF}] \label{thm:storyang}
There exists a Monte Carlo algorithm for computing row (resp. column) rank profile of a matrix $\boldA$ that runs in $\changed{\tilde{O}(\nnz{\boldA} + r^\omega)}$ time. The failure probability of this algorithm is $\nicefrac{1}{2}$.
\end{theorem}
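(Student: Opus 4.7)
The plan is to reduce column rank profile computation to a sequence of rank computations, using the Cheung--Kwok--Lau rank algorithm as a black box, and then to amortise those calls via a single random sketch of the row space. The starting observation is that the column rank profile of $A$ is determined by the prefix-rank function $r_k := \mathrm{rank}(A_{[:,1..k]})$, since an index $k$ belongs to the profile exactly when $r_k > r_{k-1}$. Because $r_k$ is nondecreasing from $0$ to $r$, it has exactly $r$ jumps, which can in principle be located by $O(r \log n)$ rank queries using binary search. A naive black-box application of~\cite{Cheung} already gives $\tilde{O}((\nnz{A}+r^\omega)\cdot r)$, overshooting the target by a factor of $r$.

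To shave this factor I would preprocess by random row compression: draw $U \in \ztwo^{\Theta(r)\times m}$ from an oblivious sketching distribution for which, with constant probability, $\mathrm{rank}(U A_{[:,1..k]}) = r_k$ simultaneously for all $k$. Once a good $U$ is in hand, compute $B := UA$ at cost $\tilde{O}(\nnz{A})$ via sparse-times-structured multiplication. Then $B$ is an $O(r)\times n$ matrix whose column rank profile agrees with that of $A$, and its profile can be extracted by sweeping the columns of $B$ left to right while maintaining an incremental block LU decomposition of the already-selected pivot columns. Each of the $r$ rank-increasing columns costs $O(r^{\omega-1})$ amortised via block updates, and the remaining $n-r$ non-pivot columns are discarded by a cheap sketch-probe against the current span, contributing only $\tilde{O}(1)$ each. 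The total comes to $(\nnz{A}+r^\omega)^{1+o(1)}$, and the row rank profile is handled symmetrically by sketching on the right instead of the left.

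The main obstacle is the distributional guarantee on $U$: a naive union bound over the $n$ prefixes forces $U$ to have $\Theta(r \log n)$ rows, which inflates the sketching cost and the downstream $r^\omega$ term past the budget. Circumventing this is the technical heart of the Storjohann--Yang pipeline, which uses tailored (structured, sparse) sketching distributions together with an internal \emph{rank certificate} that simultaneously witnesses all prefix ranks from a sketch of dimension only $\Theta(r)$. A complementary ingredient is a cheap verification step that detects a bad draw of $U$ without recomputing from scratch, so that resampling costs only a constant factor in expectation; this is what amplifies the failure probability down to the stated $\tfrac{1}{2}$. Given these two ingredients, the composition with the incremental LU sweep described above yields the claimed complexity, and the row-profile case follows by transposition.
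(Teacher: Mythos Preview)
The paper does not prove this theorem. It is stated as a cited result from Storjohann and Yang~\cite{Storjohann14,Storjohann15,Yang2014AlgorithmsFF} and used purely as a black box in the complexity analyses of Algorithms~\ref{alg:cyclebasis} and~\ref{alg:hombasis}. There is therefore no proof in the paper to compare your proposal against.

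As for the content of your sketch: it is a plausible high-level outline of the Storjohann--Yang pipeline, but it is not itself a proof, and you acknowledge as much. You correctly identify the reduction to prefix ranks and the idea of random row compression, and you correctly flag the central difficulty, namely that a naive union bound over all $n$ prefixes would force the sketch dimension up by a $\log n$ factor and break the budget. However, you then explicitly defer this obstacle back to ``the technical heart of the Storjohann--Yang pipeline'' and to an unspecified ``rank certificate'' and ``cheap verification step,'' and conclude with ``given these two ingredients.'' That is precisely the nontrivial content of the cited papers; without supplying those ingredients you have an outline, not a proof. If your intent was merely to record why the stated complexity is believable, the sketch serves that purpose, but it does not go beyond what the paper already accomplishes by citing the result.
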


\removed{Once again, the $o(1)$ in the exponent captures some missing multiplicative $ \log n$ and $\log m$ factors, see~\cite{Storjohann14}, and hence the complexity can also be written as $\tilde{O}(\nnz{\boldA}+r^{\omega})$.}

In \Cref{sec:outsense}, we use Wiedemann's algorithms as subroutines to design an output sensitive algorithm to compute the minimum homology basis of a complex.

\begin{remark} \label{rem:wiedrem}
   Wiedemann~\cite{Wiedemann} presented randomized algorithms to compute the rank of an $m\times n$ matrix $\boldA$ over a finite field and for computing a solution to a sparse system of linear equations $\boldA x = b$ (if one exists). Both algorithms run in $\tilde{O}(n_1(\removed{\omega }\changed{\nnz{\boldA}}+ n_1))$ time, where $n_1 = \max(m,n)$ is the maximal dimension of the matrix $\boldA$\removed{ and $\omega$ is the total number of nonzero entries}. 
\end{remark}

\subsection{Homology}  
In this work, we restrict our attention to simplicial homology
with $\mathbb{Z}_{2}$ coefficients. For a general introduction to algebraic
topology, we refer the reader to \cite{MR1867354}.
Below we give a brief description of homology over $\mathbb{Z}_{2}$.

Let $K$ be a connected simplicial complex. We use $K^{(p)}$ to denote
the set of $p$-dimensional simplices of $K$, and $n_p$ the number of $p$-dimensional simplices of $K$. 
 Also, the $p$-dimensional
skeleton of $K$ is denoted by $K_{p}$. In particular, the $1$-skeleton
of $K$ (which is an undirected graph) is denoted by $K_{1}$.
\changed{Throughout this paper, in context of simplicial complexes, we use $n$ to denote $|K^{(0)}|$, $m$ to denote $|K^{(1)}|$, and $N$ to denote $|K|$.}

We consider formal sums of simplices with $\mathbb{Z}_{2}$ coefficients,
that is, sums of the form $\sum_{\sigma\in K^{(p)}}a_{\sigma}\sigma$,
where each $a_{\sigma}\in\{0,1\}$. The expression $\sum_{\sigma\in K^{(p)}}a_{\sigma}\sigma$
is called a $p$\emph{-chain}. Since chains can be added to each other,
they form an abelian group, denoted by $\mathsf{C}_{p}(K)$. Since
we consider formal sums with coefficients coming from $\mathbb{Z}_{2}$,
which is a field, $\mathsf{C}_{p}(K)$, in this case, is a vector
space of dimension $n_{p}$ over $\mathbb{Z}_{2}$. The $p$-simplices in $K$ give rise to the standard
basis for $\mathsf{C}_{p}(K)$.  This establishes a 
one-to-one correspondence between elements of $\mathsf{C}_{p}(K)$
and subsets of $K^{(p)}$. 
 Thus, associated with each chain is an incidence vector $v$, indexed on
$K^{(p)}$, where $v_\sigma = 1$ if $\sigma$ is a simplex of $v$, and $v_\sigma = 0$ otherwise.
 The \emph{boundary} of a $p$-simplex is a $(p-1)$-chain that corresponds
to the set of its $(p - 1)$-faces. This map can be linearly
extended from $p$-simplices to $p$-chains, where the boundary of
a chain is the $\mathbb{Z}_{2}$-sum of the boundaries of its elements.
Such an extension is known as the \emph{boundary homomorphism}, and
denoted by $\partial_{p}:\mathsf{C}_{p}(K)\to\mathsf{C}_{p-1}(K)$.
A chain $\zeta\in\mathsf{C}_{p}(K)$ is called a \emph{$p$-cycle} if $\partial_{p}\zeta=0$,
that is, $\zeta\in \ker \partial_{p}$. The group of $p$-dimensional
cycles is denoted by $\mathsf{Z}_{p}(K)$. As before, since we are
working with $\mathbb{Z}_{2}$ coefficients, $\mathsf{Z}_{p}(K)$
is a vector space over $\mathbb{Z}_{2}$. A chain $\eta\in\mathsf{C}_{p}(K)$
is said to be a $p$-boundary if $\eta=\partial_{p+1}c$ for some chain
$c\in\mathsf{C}_{p+1}(K)$, that is, $\eta \in \im \partial_{p+1}$. 
\changed{All $p$-boundaries are also $p$-cycles and sometimes referred to as \emph{trivial cycles} or \emph{null-homologous cycles}.}
The
group of p-\removed{dimensional }boundaries is denoted by $\mathsf{B}_{p}(K)$.
In our case, $\mathsf{B}_{p}(K)$ is also a vector space, and in fact
a subspace of $\mathsf{Z}_{p}(K)$.

 Thus, we can consider the quotient
space $\mathsf{H}_{p}(K)=\mathsf{Z}_{p}(K)/\mathsf{B}_{p}(K)$. The elements of the vector space $\mathsf{H}_{p}(K)$, known as the $p$-th
\emph{homology group} of $K$, are equivalence classes of $p$-cycles,
where $p$-cycles are equivalent if their $\mathbb{Z}_{2}$-difference is a $p$-boundary.
Equivalent cycles are said to be \emph{homologous}. For a $p$-cycle
$\zeta$, its corresponding homology class is denoted by $[\zeta]$.
Bases of $\mathsf{B}_{p}(K)$, $\mathsf{Z}_{p}(K)$ and $\mathsf{H}_{p}(K)$
are called \emph{boundary bases}, \emph{cycle bases}, and \emph{homology bases} respectively. The dimension
of the $p$-th homology group of $K$ is called the $p$-th \emph{Betti number
} of $K$, denoted by $\beta_p(K)$. We are primarily interested in the first Betti number $\beta_1(K)$. For notational convenience, let $g = \beta_1(K)$, and denote the dimension of $\mathsf{B}_{1}(K)$ by $b$.

Using the standard bases for $\mathsf{C}_{p}(K)$ and $\mathsf{C}_{p-1}(K)$,
the matrix $[\partial_{p}\sigma_{1}\:\partial_{p}\sigma_{2}\cdot\cdot\cdot\partial_{p}\sigma_{n_{p}}]$
whose column vectors are boundaries of $p$-simplices is called the
$p$-th boundary matrix. Abusing notation, we denote the $p$-th boundary
matrix by $\partial_{p}$. For the rest of the paper, we use $n, m$ and $N$ to denote the number of vertices, edges and simplices in the complex respectively.

A set of $p$-cycles $\left\{ \zeta_{1},\dots,\zeta_{g}\right\} $
is called a \emph{homology cycle basis} if the set of classes $\left\{ [\zeta_{1}],\dots,[\zeta_{g}]\right\} $
forms a homology basis. For brevity, we abuse notation by using the term \enquote{homology
basis} for $\left\{ \zeta_{1},\dots,\zeta_{g}\right\} $. 
Assigning non-negative weights to the edges of $K$,  the \emph{weight of a cycle} is the sum of the weights of its edges, and the \emph{weight of a homology basis} is the sum of the weights of the basis elements. The problem of computing a minimum weight  basis of $\hone$ \changed{(that is, a \emph{lightest basis} of $\hone$)} is called the \emph{minimum homology basis} problem. Note that, when the input simplicial complex is a graph, the notions of homology basis and cycle basis coincide. Please refer to~\Cref{fig:opgates} for an example.

\changed{Moreover, we define the \emph{length of a cycle} to be the number of edges in the cycle, and the \emph{length of a homology basis} to be the sum of lengths of the cycles of the basis elements.}

 \begin{figure}
\centering
\includegraphics[scale=0.72]{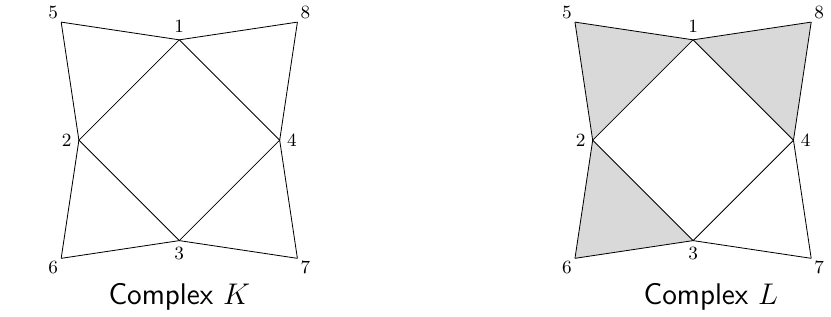} 

 \caption{Consider complexes $K$ and $L$ in the figure above with  unit weights on  the edges.  Since $K$ has no $2$-simplices, its $1$-skeleton $K_1$ is identical to $K$ itself.
The set of cycles $\mathcal{C} = \left\{  \{1,2,5\}, \{1,4,8\}, \{3,4,7\}, \{2,3,6\}, \{1,2,3,4\} \right \}$ constitutes a minimum cycle basis for the respective $1$-skeletons $K_1$ and $L_1$ (viewed as graphs). The set $\mathcal{C}$ also constitutes a minimum homology basis for $K$. The set $\mathcal{C}' =  \{\{1,2,3,4\}, \{3,4,7\} \}$ constitutes a minimum homology basis for  $L$.
   \label{fig:opgates}}
\end{figure}

For the special case when the input complex is a surface,	Erickson and Whittlesey~\cite{EW} gave a $O(n^{2}\log n +g n^{2}+g^{3} n)$-time algorithm. Recently, Borradaile et al.~\cite{Borradaile} gave an improved deterministic algorithm that runs in \changed{$O({g}^3 n \log n+m)$ assuming the lightest paths are unique}. For small values of \changed{$g$}, the algorithm in \cite{Borradaile} runs in nearly linear time. 

Furthermore, Dey et al.~\cite{Shortloop_Paper} and Chen and Freedman~\cite{ChenFreedman} generalized the results by Erickson and Whittlesey~\cite{EW} to arbitrary complexes. Subsequently, introducing the technique of annotations, Busaryev et al.~\cite{Busaryev} improved the complexity to $O(N^{\omega}+N^{2}g^{\omega-1})$. More recently, Dey et al.~\cite{DeyLatest} designed an $O(N^\omega + N^2 g)$ time algorithm by adapting the divide and conquer algorithm for computing a minimum cycle basis of Kavitha et al.~\cite{KavithaOld} for the purpose of computing a minimum homology basis. Dey et al. also designed a randomized $2$-approximation algorithm for the same problem that runs in $O(N^\omega \sqrt{N \log N})$ expected time. \changed{It is possible to etablish a tighter  bound for the algorithm by Dey et al.~\cite{DeyLatest} (See Section~\ref{sec:compare}).}

\subsection{Matroids} \label{sec:matroid}

A matroid $\mathscr{M}$ consists of a pair $(S,\text{\ensuremath{\mathscr{I}}})$,
where $S$ is a finite ground set and $\text{\ensuremath{\mathscr{I}}}$
is a family of subsets of $S$ satisfying the following axioms:
\begin{enumerate}
\item $\emptyset\in\mathscr{I}$;
\item if $I\in\mathscr{I}$ and $J\subseteq I$, then $J\in\mathscr{I}$;
and
\item if $I,K\in\mathscr{I}$ and $|I|<|K|$, then there is an element $e\in K\setminus I$
such that $I\bigcup\left\{ e\right\} \in\mathscr{I}$.
\end{enumerate}
If a set $I\subset S$ belongs to $\mathscr{I}$, then it is called
an \emph{independent set}; otherwise it is called a \emph{dependent set}.
A \emph{circuit} in a matroid $\mathscr{M}$ is a minimal dependent
subset of $S$. All proper subsets of  circuits are independent sets.
A maximal independent set is called a \emph{basis} of the matroid. 

Matroids admit a very useful property that goes by the name of \emph{basis
exchange property}: If $A$ and $B$ are distinct bases of a matroid
and $a\in A\setminus B$, then there exists an element $b\in B\setminus A$ such
that $A(\setminus\left\{ a\right\} )\bigcup\left\{ b\right\} $ is
again a basis. A \emph{weighted matroid} is a matroid $\mathscr{M}$
equipped with a weight function $w:S\to\mathbb{R}^{+}$ that additively
extends to all subsets of $S$. From an algorithmic standpoint, the
most important property of weighted matroids is that there is a greedy algorithm
for these matroids that computes the maximum (minimum) weight basis. It
can be easily checked that \textbf{$B$ }is a minimum weight basis
of a matroid if and only if none of the \changed{elements} of $B$ can be exchanged
for a \changed{lighter element} while still preserving linear independence. As
an immediate consequence, if the \changed{elements} in two distinct minimum \removed{cycle
(homology)} bases are sorted by weight, then the ordered sets of sorted
weights coincide. 

Let the cycle space of $G$ be the ground set \changed{$\Omega$}, and let $\mathscr{I}$ be defined as follows.

\[
\mathscr{I} = \{ I  \mid I \text{ is a linearly independent set of cycles of } G  \}
\]

Then, the pair $\mathscr{M}=(\changed{\Omega},\mathscr{I})$ forms a matroid. When
combined with a weight function on edges, it becomes a weighted matroid.
Cycle bases of $G$ correspond to the bases of $\mathscr{M}$. 

Analogously,  let the nontrivial $1$-cycles of a $2$-complex $\complex$ be the ground set \changed{$\Omega'$},
and let $\mathscr{I}'$ be defined as follows.

\[
\mathscr{I'} = \{ I  \mid I \text{ is a linearly independent set of nontrivial cycles of } \complex  \}
\]

Then, the pair $\mathscr{M}'=(\changed{\Omega'},\mathscr{I}')$ together with a weight function on the edges forms a weighted matroid. \changed{Sets of cycles whose classes form $1$-homology bases of $\complex$
are the bases of matroid $\mathscr{M}'$. }

Key to our algorithms is the property that the cycles in a cycle (homology) basis can be exchanged with other cycles while preserving independence.

\changed{
\begin{remark} \label{rem:opt}
    For any subset $\Omega_1$ of the ground set $\Omega$ (resp. any subset $\Omega_1'$ of the ground set $\Omega'$) with the property that $\Omega_1$ (resp. $\Omega_1'$) contains a minimum weight basis, the column rank profile of a matrix whose columns consist of cycles from $\Omega_1$ (resp. cycles from $\Omega_1'$) is \emph{the} greedy algorithm that returns a minimum weight basis.
\end{remark}
}

\section{An algorithm for computing \changed{ a }minimum cycle basis} \label{sec:mainone}

Given a graph $G=(V,E)$, let $ \{C_1, \dots, C_{|\tightCycles|} \}$ be the  list of tight cycles in $G$ sorted by weight, and let $\tightmatrix{G} =[C_1 \, C_2 \dots \, C_{|\tightCycles|}]$ be the matrix formed with cycles $C_i$ as its columns. Using~\Cref{thm:totallength}, since the total length of tight cycles is at most $n\nu$, and since each tight cycle consists of at least three edges, we have that $|\tightCycles| \leq \frac{n\nu}{3}$. Also, the rank of  $\tightmatrix{G}$ is $\nu$ and  $\tightmatrix{G}$ is a sparse matrix with $\nnz{\tightmatrix{G}}$ bounded by $n\nu$.  This sparsity is implicitly used in the design of the Monte Carlo binary search algorithm for computing \changed{ a } minimum cycle basis, as described in \cite{AmaldiESA}. We now present a simple and fast algorithm for minimum cycle basis that exploits the sparsity and the low rank of $\tightmatrix{G}$ more directly.

\begin{algorithm}[H]
\caption{Algorithm for minimum cycle basis}\label{alg:cyclebasis}
\begin{algorithmic}[1]
\vspace{1.5mm}

\State{Compute the sorted list of tight cycles in $G$, and assemble the matrix $\tightmatrix{G}$.}
\State{Compute the column rank profile $[i_{1},i_{2}, \dots ,i_{\nu}]$ of $\tightmatrix{G}$ using Storjohann and Yang's algorithm described in~\cite{Storjohann15}. }
\State{$\textsc{Return } \earliestbasis{\tightmatrix{G}}$.}

\end{algorithmic}
\end{algorithm}

\begin{theorem} \label{thm:algone} There is a Monte Carlo algorithm that computes the minimum cycle basis in  $\tilde{O}(m^\omega)$ time, with failure probability at most $\nicefrac{1}{2}$.
\end{theorem}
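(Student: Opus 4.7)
The plan is to split the argument into a correctness part and a running-time part. The correctness will follow from a standard matroid-greedy argument combined with Horton's \Cref{thm:horton}, and the running time will follow by plugging the sparsity bound of \Cref{thm:totallength} into \Cref{thm:storyang}.

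For correctness, I would observe that linear independence over $\ztwo$ gives the columns of $\tightmatrix{G}$ the structure of a linear matroid. The earliest basis $\earliestbasis{\tightmatrix{G}}$ is exactly the output of the greedy algorithm on this matroid when the columns are processed in the given weight-increasing order: the $i$-th tight cycle is selected precisely when it is linearly independent of the tight cycles selected before it. By the matroid-greedy theorem, this yields a minimum-weight basis of the linear matroid restricted to $\tightCycles$. By Horton's \Cref{thm:horton}, some minimum cycle basis of $G$ lies entirely in $\tightCycles$, so the restricted matroid has rank $\nu$ and any of its minimum-weight bases is a minimum cycle basis of $G$.

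For the running time, I would first invoke Amaldi et al.'s $O(nm)$ algorithm to produce $\tightCycles$, then sort the cycles by weight in time $O(n\nu \log n)$, using the fact from \Cref{thm:totallength} that their total description length is $O(n\nu)$. The resulting matrix $\tightmatrix{G}$ has rank $\nu$ and, again by \Cref{thm:totallength}, satisfies $\nnz{\tightmatrix{G}} \leq n\nu$. Plugging into \Cref{thm:storyang}, the column rank profile (and hence the earliest basis) is obtained in time
\[
\tilde{O}\bigl(\nnz{\tightmatrix{G}} + \nu^{\omega}\bigr) \;=\; \tilde{O}(n\nu + \nu^{\omega}) \;=\; \tilde{O}(m^{\omega}),
\]
where the last equality uses $\nu \leq m$, $n \leq m+1$, and $\omega \geq 2$. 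The failure probability of $\nicefrac{1}{2}$ is inherited directly from \Cref{thm:storyang}.

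I do not expect a serious technical obstacle. The main conceptual move is recognising that computing the earliest basis of $\tightmatrix{G}$ already implements the matroid greedy; once that is noted, correctness is immediate from Horton's theorem and the complexity analysis is essentially just bookkeeping on top of \Cref{thm:storyang}. The only mild subtlety is verifying that the sparsity bound $\nnz{\tightmatrix{G}} \leq n\nu$ is strong enough to keep the nonzero-entry contribution below $m^{\omega}$, which follows from $n\nu \leq (m+1)m$ and $\omega \geq 2$.
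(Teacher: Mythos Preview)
Your proposal is correct and follows essentially the same approach as the paper: both compute the earliest basis of the weight-sorted tight-cycle matrix, invoke \Cref{thm:horton} for correctness, and bound the running time via \Cref{thm:totallength} and \Cref{thm:storyang}. The only cosmetic difference is that you phrase the correctness of the earliest basis via the matroid-greedy theorem, whereas the paper gives a short direct exchange argument to the same effect.
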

\begin{proof} The correctness of the algorithm follows immediately from \Cref{thm:horton}. 
As noted in \Cref{sec:matroid}, the cycles of a graph form a weighted matroid. \changed{Since the cycles in $\tightmatrix{G}$ are sorted by weight, and since the tight cycles of a graph are guaranteed to contain a minimum cycle basis, from \Cref{rem:opt} the column rank profile of $\tightmatrix{G}$ is the greedy algorithm, and the earliest basis $ \earliestbasis{\tightmatrix{G}}$ is a minimum cycle basis.}

The  list of tight cycles in $G$ can be computed in $O(n m)$ time using the algorithm described in Section~2 of \cite{AmaldiESA}. Hence, Step~1 of \Cref{alg:cyclebasis} takes $O(n m \log (n m))$ time (which in turn is same as $O(n m \log n )$ time).
Moreover, using \Cref{thm:storyang}, the complexity of  Step~2 is bounded by $\tilde{O}(n\nu + \nu^\omega)$. Since $n, \nu< m$, the complexity of \Cref{alg:cyclebasis} is bounded by $\tilde{O}(m^\omega)$. Using \Cref{thm:storyang}, the failure probability of the algorithm is at most $\nicefrac{1}{2}$.
\end{proof}

\section{Minimum homology basis, minimum cycle basis and tight cycles}
\label{sec:strucresult}

To begin with, note that since every graph is a $1$-dimensional simplicial complex, the minimum cycle basis problem is a restriction of the minimum homology basis problem to instances (simplicial complexes) that have no $2$-simplices. In this section, we refine this observation by deriving a closer relation between the two problems.


\changed{We will now define some notation that we will use in Lemma~\ref{lem:basecase} and \Cref{thm:keyresult}.}

\begin{notation} \label{not:bignote}
We assume that we are provided a complex $K$ in which all edges are assigned
non-negative weights.
\removed{Given a non-negative weight function $w: E \to \mathbb{R}^{+}$,
we define the weight of a cycle $z$ as the sum of the weights of
its edges, $w(z)=\sum_{\sigma\in z}w(\sigma)$.} 
Let $w: E \to \mathbb{R}^{+}$ be a non-negative weight function on the edges of a complex $K$, and let $\minBdryBasis = \{\bdryelem_1, \dots, \bdryelem_b\} $ be a basis for the boundary vector space $\mathsf{B}_{1}(K)$ indexed so that  $\sizeof{\bdryelem_i} \leq \sizeof{\bdryelem_{i+1}}$, $1\leq i < b$ (with ties broken arbitrarily).
Also, let $\minHomCol = \{\cycelem_1, \dots, \cycelem_g\}$ be a minimum homology basis of $K$ indexed so that $\sizeof{\cycelem_i} \leq \sizeof{\cycelem_{i+1}}$, $1\leq i <g$ (with ties broken arbitrarily).
Then, the set $\minTotalBasis = \{ \bdryelem_1, \dots, \bdryelem_b, \cycelem_1, \dots, \cycelem_g\}$ is a cycle basis for $K_1$.
Let $\minCycleBasis$ be a minimum cycle basis of $K_1$. Every element $C \in \minCycleBasis$ is homologous to a cycle $\sum_{i=1}^g a_i \cycelem_{i}$ where $a_i \in \{0,1\}$ for each $i$.
 Then,  for some fixed integers $p$ and $q$, $\minCycleBasis = \{B_1, \dots, B_q, C_1, \dots, C_p \}$ is indexed so that the elements $B_1, \dots, B_q$ are null-homologous and the elements $C_1,\dots, C_p $ are non-bounding cycles. Also, we have  $\sizeof{B_j} \leq \sizeof{B_{j+1}}$ for $1\leq j < q$ (with ties broken arbitrarily), and  $\sizeof{C_j}  \leq \sizeof{C_{j+1}}$ for $1\leq j < p$ (with ties broken arbitrarily).
\end{notation}

\changed{
 \begin{remark}
     We note that $p \geq  g$. This is because a null-homologous cycle can be obtained as a linear combination of non-bounding cycles, but a non-bounding cycle cannot be obtained as a linear combination of null-homologous cycles. Thus, there have to be at least $g$ linearly non-bounding cycles in $\minCycleBasis$ to be able to obtain all cycles in any homology basis from linear combinations of cycles in $\minCycleBasis$.
 \end{remark}
}
\begin{lemma} \label{lem:basecase} 
\changed{Given a simplicial complex $K$, suppose that $\mathcal{H}$, $\mathcal{C}$ and $\mathcal{M}$ are defined as in Notation~\ref{not:bignote}. The following two statements are true}
\begin{enumerate}
\item \changed{ $\sizeof{\cycelem_1} =  \sizeof{C_1}$}, and
\item  there exists a minimum homology basis $\barminhom$ with $\cycelem_1$  homologous to $C_1$. 
 \end{enumerate}
\end{lemma}
\begin{proof} 
Targeting \changed{a contradiction, suppose} there exists a minimum homology basis with  $\sizeof{\cycelem_1} < \sizeof{C_1}$. 
Let
 $\cycelem_1 = \sum_{i=1}^{p} a_i C_i  + \sum_{j=1}^{q} b_j B_j,$
where $a_i \in \{0,1\}$ for each $i$ and  $b_j \in \{0,1\}$ for each $j$. Since $\cycelem_1$ is a non-bounding cycle, there exists at least one $i$ with $a_i = 1$. Let $\ell \in [1,p]$ be the largest index in the above equation with $a_\ell = 1$. 
Rewriting the equation, we obtain
$ C_\ell =   \sum_{ i = 1}^{\ell - 1}  a_i C_i  + \sum_{j=1}^{q} b_j B_j  + \cycelem_1. $
Since $\sizeof{\cycelem_1} < \sizeof{C_1}$ by assumption, we have $\sizeof{\cycelem_1} < \sizeof{C_\ell}$ because $\sizeof{C_\ell} \geq \sizeof{C_1}$ by indexing of $\minCycleBasis$.
It follows that the basis obtained by exchanging $C_\ell$ for $\cycelem_1$, that is, $\{B_1, \dots, B_q, \cycelem_1, C_1,\dots, C_{\ell - 1},   C_{\ell + 1}, \dots, C_p\}$ gives a smaller cycle basis than the minimum one, a contradiction.

Once again, targeting a contradiction, suppose there exists a minimum homology basis with $\sizeof{\cycelem_1} > \sizeof{C_1} $.
Let 
$C_1 =  \sum_{i=1}^{g} a_i \cycelem_i  + \sum_{j=1}^{b} b_j \bdryelem_j $. As before, since $C_1$ is not null-homologous, there exists at least one $i$ with $a_i = 1$. Let $\ell \in [1,g]$ be the largest index in the above equation with $a_\ell = 1$. 
Then, $\cycelem_\ell =   \sum_{ i = 1}^{\ell - 1}  a_i \cycelem_i  + \sum_{j=1}^{b} b_j \bdryelem_j  + C_1.$
Note that $\sizeof{\cycelem_\ell} \geq \sizeof{\cycelem_1}$ because of the indexing, and $\sizeof{\cycelem_1} > \sizeof{C_1} $ by assumption.
 Therefore, the set $\{  C_1, \cycelem_1,\dots, \cycelem_{\ell - 1},  \cycelem_{\ell + 1}, \dots, \cycelem_p\}$  obtained by exchanging $\cycelem_\ell$ for $C_1$ gives a smaller homology basis than the minimum one, a contradiction.
 This proves the first part of the lemma.
   
  From the first part of the lemma, we have $\sizeof{\cycelem_1} = \sizeof{C_1}$ for every minimum homology basis. Let $\minHomCol$ be an arbitrary minimum homology basis. Then, if $C_1$ is not homologous to $\cycelem_1 \in \minHomCol$, by using basis exchange we can obtain $\barminhom = \{  C_1, \cycelem_1,\dots, \cycelem_{\ell - 1},  \cycelem_{\ell + 1}, \dots,$ $\cycelem_p\}$, which is the minimum homology basis with its first element  homologous to $C_1$, and having the same weight as $\sizeof{C_1}$, proving the claim.
\end{proof}

We now prove a theorem which allows us to harness fast algorithms for minimum cycle basis in service of improving time complexity of algorithms for minimum homology basis. 

\begin{theorem}\label{thm:keyresult} 
\changed{Given a simplicial complex $K$, suppose that $\mathcal{H}$, $\mathcal{C}$ and $\mathcal{M}$ are defined as in Notation~\ref{not:bignote}. Then,} there exists a minimum homology basis $\barminhom $ of $K$, and a set  $\{C_{i_1}, \dots, C_{i_g} \} \subset  \{C_1,\dots,  C_p\} \subset \minCycleBasis$  such that, for every $k \in [1,g]$, we have $C_{i_k}$ homologous to a cycle spanned by $\cycelem_1,\dots, \cycelem_k$, and $\sizeof{C_{i_k}} = \sizeof{\cycelem_k} $. Moreover, $i_1 = 1$, and $i_{k}$ for $k>1$ is the smallest index for which $C_{i_{k}} $  is not homologous to any cycle spanned by $\{C_{i_1}, \dots, C_{i_{k-1}} \}$.
  In particular,  the set $\{C_{i_1}, \dots, C_{i_g} \} \subset  \minCycleBasis$ constitutes a minimum homology basis of $K$.
\end{theorem}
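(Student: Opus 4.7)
The plan is to proceed by induction on the prefix length $k$. The base case $k=1$ is precisely \Cref{lem:basecase}, which already furnishes a minimum homology basis $\barminhom$ with $[\cycelem_1] = [C_1] = [C_{i_1}]$ and $\sizeof{\cycelem_1} = \sizeof{C_{i_1}}$. For the inductive step, I would carry forward the hypothesis that a minimum homology basis $\barminhom = \{\cycelem_1,\dots,\cycelem_g\}$ (sorted by weight) is available in which $\sizeof{\cycelem_j} = \sizeof{C_{i_j}}$ and $[C_{i_j}] \in \mathrm{span}\{[\cycelem_1],\dots,[\cycelem_j]\}$ for every $j \leq k$. Combining this with the linear independence of homology classes gives the key span equality $\mathrm{span}\{[C_{i_1}],\dots,[C_{i_k}]\} = \mathrm{span}\{[\cycelem_1],\dots,[\cycelem_k]\}$; moreover, the greedy definition of the $i_j$'s ensures that every $[C_j]$ with $j < i_{k+1}$ already lies in this common span (indices skipped by the greedy rule were skipped precisely because their classes were dependent on the previously chosen ones).

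The crux is to establish $\sizeof{\cycelem_{k+1}} = \sizeof{C_{i_{k+1}}}$ via two basis-exchange arguments in the spirit of \Cref{lem:basecase}. In one direction, I would expand $\cycelem_{k+1}$ over $\minCycleBasis$ and let $\ell$ be the largest index of a non-bounding cycle $C_\ell$ appearing with coefficient $1$; the span observation forces $\ell \geq i_{k+1}$, so exchanging $C_\ell$ for $\cycelem_{k+1}$ in $\minCycleBasis$ yields another cycle basis of $K_1$, and minimality of $\minCycleBasis$ delivers $\sizeof{\cycelem_{k+1}} \geq \sizeof{C_\ell} \geq \sizeof{C_{i_{k+1}}}$. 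In the other direction, I would expand $[C_{i_{k+1}}]$ over $\barminhom$; since $[C_{i_{k+1}}]$ does not lie in the span of the first $k$ elements, some $\cycelem_{j^*}$ with $j^* > k$ appears with nonzero coefficient, so swapping $\cycelem_{j^*}$ for $C_{i_{k+1}}$ yields another homology basis and the minimality of $\barminhom$ gives $\sizeof{\cycelem_{k+1}} \leq \sizeof{\cycelem_{j^*}} \leq \sizeof{C_{i_{k+1}}}$.

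Together these inequalities force $\sizeof{\cycelem_{k+1}} = \sizeof{\cycelem_{j^*}} = \sizeof{C_{i_{k+1}}}$, so all of $\cycelem_{k+1},\dots,\cycelem_{j^*}$ share the same weight. After performing the second swap, I would reorder the basis (breaking ties freely) so that $C_{i_{k+1}}$ occupies position $k+1$, producing an updated minimum homology basis that satisfies the inductive hypothesis at $k+1$. Iterating to $k=g$ delivers the $\barminhom$ claimed by the theorem. The concluding assertion then follows essentially for free: $\{C_{i_1},\dots,C_{i_g}\}$ is linearly independent in $\hone$ by the greedy construction, it spans $\hone$ because $\{[C_1],\dots,[C_p]\}$ already does and each non-chosen $[C_j]$ was by definition captured by earlier $[C_{i_k}]$, and its total weight $\sum_k \sizeof{C_{i_k}}$ equals the minimum $\sum_k \sizeof{\cycelem_k}$.

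The main obstacle, I expect, is the joint bookkeeping tying the two indexings together: the greedy rule defining $i_{k+1}$ only references the previously chosen $C_{i_j}$, while both exchange arguments depend on a span equality between $\{[C_{i_j}]\}_{j\leq k}$ and $\{[\cycelem_j]\}_{j\leq k}$. The inductive hypothesis therefore has to be phrased so as to propagate this span equality at every step, and the reorder-after-swap move, which is legal precisely because of the weight equality just derived, is what lets the propagation close.
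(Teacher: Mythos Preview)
Your proposal is correct and follows essentially the same approach as the paper: induction on $k$ with \Cref{lem:basecase} as the base case, and at each step the two symmetric basis-exchange arguments (expand $\cycelem_{k+1}$ over $\minCycleBasis$, expand $C_{i_{k+1}}$ over the homology basis) to pin down $\sizeof{\cycelem_{k+1}} = \sizeof{C_{i_{k+1}}}$, followed by a swap to restore the inductive hypothesis. Your version is in fact a bit more explicit than the paper's about the span equality $\mathrm{span}\{[C_{i_1}],\dots,[C_{i_k}]\} = \mathrm{span}\{[\cycelem_1],\dots,[\cycelem_k]\}$ and about why the post-swap reordering is legitimate, which are exactly the points the paper leaves implicit.
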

\begin{proof}  
The key argument is essentially the same as for the proof of \Cref{lem:basecase}. Nonetheless, we present it here for the sake of completeness. We shall prove the claim by induction. \Cref{lem:basecase} covers the base case. By induction hypothesis, there is an integer \changed{$k < g$}, and a minimum homology basis $\minHomCol = \{\cycelem_1, \dots, \cycelem_g\}$, for which, vectors $\{C_{i_1}, \dots, C_{i_k} \} \subseteq  \{C_1,\dots,  C_p\}$ are such that, for every $j \in [1,k]$, we have $C_{i_j}$ homologous to a cycle spanned by $\cycelem_1,\dots, \cycelem_j$, and $\sizeof{C_{i_j}} = \sizeof{\cycelem_j} $. Let $i_{k+1}$ be the smallest index for which $C_{i_{k+1}} \in \{C_1,\dots,  C_p\}$  is not homologous to any cycle spanned by $\{C_{i_1}, \dots, C_{i_k} \}$. \changed{Such an index $i_{k+1} \leq p$ exists assuming every nontrivial cycle in $\mathcal{C}$ can be obtained as a linear combinaton of cycles in $\mathcal{M}$.} 

Suppose that $\sizeof{\cycelem_{k+1}} < \sizeof{C_{i_{k+1}}}$.
Let
 $\cycelem_{k+1} = \sum_{i=1}^{p} a_i C_i  + \sum_{j=1}^{q} b_j B_j$. 
Let $\ell \in [1,p]$ be the largest index in the above equation with $a_\ell = 1$. 
Then,
$ C_\ell =   \sum_{ i = 1}^{\ell - 1}  a_i C_i  + \sum_{j=1}^{q} b_j B_j  + \cycelem_{k+1}. $
From the induction hypothesis, we can infer that $\ell \geq i_{k+1}$, and hence $\sizeof{C_\ell} \geq \sizeof{C_{i_{k+1}}}$ by indexing of $\minCycleBasis$.
 Thus, if $\sizeof{\cycelem_{k+1}} < \sizeof{C_{i_{k+1}}}$, then we have $\sizeof{\cycelem_{k+1}} < \sizeof{C_\ell}$.
 It follows that, $\{B_1, \dots, B_q, \cycelem_{k+1}, C_1,\dots, C_{\ell - 1},  C_{\ell + 1}, \dots, C_p\}$ obtained by exchanging $C_\ell$ for $\cycelem_{k+1}$ gives a smaller cycle basis than the minimum one, contradicting the minimality of $\minHomCol$.

 Now, suppose that $\sizeof{\cycelem_{k+1}} > \sizeof{C_{i_{k+1}}} $. 
Let 
$C_{i_{k+1}} =  \sum_{i=1}^{g} a_i \cycelem_i  + \sum_{j=1}^{b} b_j \bdryelem_j $. Let $\ell \in [1,g]$ be the largest index in the above equation with $a_\ell = 1$. 
Rewriting the equation, we obtain $\cycelem_\ell =   \sum_{ i = 1}^{\ell - 1}  a_i \cycelem_i  + \sum_{j=1}^{b} b_j \bdryelem_j  + C_{i_{k+1}}.$
Again, using the induction hypothesis, $\ell \geq k+1$, and hence, $\sizeof{\cycelem_\ell} \geq \sizeof{\cycelem_{k+1}}$ because of the indexing. Since we have assumed $\sizeof{\cycelem_{k+1}} > \sizeof{C_{i_{k+1}}} $, this gives us $\sizeof{\cycelem_\ell} > \sizeof{C_{i_{k+1}}} $.
 Hence, the set $\{  C_{i_{k+1}}, \cycelem_1,\dots, \cycelem_{\ell - 1},  \cycelem_{\ell + 1}, \dots, \cycelem_p\}$ obtained by exchanging $\cycelem_\ell$ for $ C_{i_{k+1}}$ gives a smaller homology basis than the minimum one, contradicting the minimality of $\minHomCol$.
 
 From the  first part of the proof, we have established that  $\sizeof{C_{i_{k+1}}} = \sizeof{\cycelem_{k+1}} $.
 So, if $C_{i_{k+1}}$ is not homologous to $\cycelem_{k+1} \in \minHomCol$ and $\sizeof{\cycelem_{k+1}} = \sizeof{C_{i_{k+1}}}$, then  $ \barminhom = \{  C_{i_{k+1}}, \cycelem_1,\dots,$ $\cycelem_{\ell - 1},  \cycelem_{\ell + 1}, \dots, \cycelem_p\}$ obtained by exchanging $\cycelem_\ell$ for $ C_{i_{k+1}}$  is the desired minimum homology basis, proving the induction claim.
\end{proof}



Previously, it was known from Erickson and Whittlesey~\cite{EW} that $\minHomCol$ is contained in $\tightCycles$.

\begin{theorem} [Erickson and Whittlesey~\cite{EW}] \label{thm:EW} With non-negative weights, every cycle in a \changed{lightest} basis of $\Hone(K)$ is tight. 
That is, if $\minHomCol$ is any minimum homology basis of $K$, then $\minHomCol \subset \tightCycles$.
\end{theorem}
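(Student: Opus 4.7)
The plan is to argue by contradiction via a homological exchange. Suppose $\minHomCol = \{\cycelem_1, \dots, \cycelem_g\}$ is a minimum homology basis of $K$ but some element $\cycelem_i$ fails to be tight. Then there exist two vertices $u,v$ on $\cycelem_i$ such that neither of the two arcs of $\cycelem_i$ joining them is a shortest $u$--$v$ path in $K_1$. Fix such a pair, let $Q$ be a shortest $u$--$v$ path in $K_1$, and write the two arcs of $\cycelem_i$ as $P_1, P_2$, indexed so that $\sizeof{P_1} \leq \sizeof{P_2}$. Then $\cycelem_i = P_1 + P_2$ as $1$-chains over $\ztwo$, and the failure of tightness at $(u,v)$ forces $\sizeof{Q} < \sizeof{P_1}$, since otherwise $P_1$ itself would be a shortest $u$--$v$ path contained in $\cycelem_i$.

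Next, I would split $\cycelem_i$ into two strictly lighter cycles. Set $\alpha := P_1 + Q$ and $\beta := P_2 + Q$. Because $\partial P_1 = \partial P_2 = \partial Q = u + v$ over $\ztwo$, both $\alpha$ and $\beta$ lie in $\zone$, and by construction $\alpha + \beta = P_1 + P_2 = \cycelem_i$. Using the symmetric-difference bound $\sizeof{X + Y} \leq \sizeof{X} + \sizeof{Y}$ together with the chain of inequalities $\sizeof{Q} < \sizeof{P_1} \leq \sizeof{P_2}$, I obtain
\[
\sizeof{\alpha} \leq \sizeof{P_1} + \sizeof{Q} < \sizeof{P_1} + \sizeof{P_2} = \sizeof{\cycelem_i},
\]
and analogously $\sizeof{\beta} \leq \sizeof{P_2} + \sizeof{Q} < \sizeof{\cycelem_i}$.

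To derive the contradiction, I would expand the classes $[\alpha]$ and $[\beta]$ in the basis $\{[\cycelem_1], \dots, [\cycelem_g]\}$ of $\hone$. Since $[\alpha] + [\beta] = [\cycelem_i]$, the $\ztwo$-coefficients of $[\cycelem_i]$ in these two expansions must sum to $1$, so exactly one of them---say that of $[\alpha]$---is nonzero. Consequently $[\cycelem_i]$ lies in the span of $[\alpha]$ and $\{[\cycelem_j] : j \neq i\}$, and the set $\{\cycelem_1, \dots, \cycelem_{i-1}, \alpha, \cycelem_{i+1}, \dots, \cycelem_g\}$ is again a homology basis. Its total weight is strictly smaller than that of $\minHomCol$, contradicting minimality; hence every element of $\minHomCol$ must be tight.

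The main subtlety is the first paragraph: non-tightness at a pair $(u,v)$ has to be used in the strong form that \emph{both} arcs of $\cycelem_i$ strictly exceed the graph distance. This is automatic once one notices that an arc whose weight equals or undercuts $\sizeof{Q}$ would itself be a shortest $u$--$v$ path and would therefore already make $\cycelem_i$ ``tight at $(u,v)$''. Once that observation is in place the remainder---building $\alpha, \beta$ and performing the homological exchange---is routine linear algebra over $\ztwo$.
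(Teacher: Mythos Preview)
The paper does not prove this theorem; it is quoted as a result of Erickson and Whittlesey~\cite{EW} and used as a black box, so there is no in-paper proof to compare against.

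Your argument is the standard shortcut-exchange proof and is correct in substance. One small point you gloss over: you tacitly assume that $\cycelem_i$ is an elementary cycle, so that there are exactly two arcs $P_1,P_2$ between $u$ and $v$. This deserves a sentence, since the paper defines tightness only for simple cycles. The fix is easy and uses the very same exchange mechanism: if $\cycelem_i$ were not elementary, write it as an edge-disjoint $\ztwo$-sum $\gamma_1+\cdots+\gamma_k$ of elementary cycles, each with $\sizeof{\gamma_j}\le\sizeof{\cycelem_i}$; since $[\gamma_1]+\cdots+[\gamma_k]=[\cycelem_i]$, some $\gamma_j$ has its $[\cycelem_i]$-coefficient equal to $1$, and swapping it in yields a basis of no greater weight. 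With strictly positive weights this already gives a contradiction; with non-negative weights it shows you may assume $\cycelem_i$ is elementary. After that, your splitting $\cycelem_i=\alpha+\beta$ with $\sizeof{\alpha},\sizeof{\beta}<\sizeof{\cycelem_i}$ and the homological exchange go through exactly as you wrote.
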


Using~\Cref{thm:keyresult,thm:horton}, we can refine the above observation.

\begin{corollary} \label{cor:maincor}
Let $\tightCycles$ denote the set of tight cycles of $K_1$, and let $\minCycleBasis$ be a minimum cycle basis of $K_1$. Then, there exists a minimum homology basis $\minHomCol$ of $K$ such that $\minHomCol \subset \minCycleBasis \subset \tightCycles$.
\end{corollary}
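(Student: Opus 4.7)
The plan is to obtain the corollary as a direct combination of the two stated results \Cref{thm:horton} and \Cref{thm:keyresult}, with essentially no additional argument needed.

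First, I would handle the inclusion $\minCycleBasis \subset \tightCycles$. This is immediate from Horton's theorem (\Cref{thm:horton}), which asserts that a minimum cycle basis consists only of tight cycles. So any minimum cycle basis $\minCycleBasis$ of the graph $K_1$ lies inside $\tightCycles$.

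Next, I would handle the inclusion $\minHomCol \subset \minCycleBasis$ using \Cref{thm:keyresult}. That theorem, applied to the minimum cycle basis $\minCycleBasis = \{B_1,\dots,B_q,C_1,\dots,C_p\}$ of $K_1$, directly produces indices $i_1,\dots,i_g$ such that the subset $\{C_{i_1},\dots,C_{i_g}\} \subset \minCycleBasis$ is itself a minimum homology basis of $K$. Taking $\minHomCol := \{C_{i_1},\dots,C_{i_g}\}$ yields the required minimum homology basis contained in $\minCycleBasis$.

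Combining the two inclusions gives $\minHomCol \subset \minCycleBasis \subset \tightCycles$, as desired. I do not anticipate any real obstacle here: the substantive work has already been done in \Cref{thm:keyresult} (which uses the exchange arguments from \Cref{lem:basecase}) and in \Cref{thm:horton}. The corollary is just the clean packaging of those two statements, and it refines Erickson and Whittlesey's \Cref{thm:EW} (which only gives $\minHomCol \subset \tightCycles$) by interposing the minimum cycle basis $\minCycleBasis$ between $\minHomCol$ and $\tightCycles$.
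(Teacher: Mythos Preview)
Your proposal is correct and matches the paper's approach exactly: the paper does not give a separate proof of the corollary but simply states it follows from \Cref{thm:keyresult} and \Cref{thm:horton}, which is precisely the combination you outline.
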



\section{Algorithms for minimum homology basis} \label{sec:main}

To begin with, note that since $\cp, \zp, \bp$ and $\hp$ are vector spaces, the problem of computing a minimum homology basis can be couched in terms of matrix operations.

Given  a complex $K$, let $ \{C_1, \dots, C_{|\tightCycles|} \}$ be the  list of tight cycles in $K_1$ sorted by weight, and let $\boldT =[C_1 \, C_2 \dots \, C_{|\tightCycles|}]$ be the matrix formed with cycles $C_i$ as its columns. 
Then, the matrix $\newmatrix = [\partial_2 \,\, | \,\, \boldT] $ has $O(N + n \nu)$ columns and $O(N+n \nu)$ non-zero entries since $\boldT$ has $O(n \nu)$ columns and $O(n \nu)$ non-zero entries by~\Cref{thm:totallength}, 
and $\partial_2$ has $O(N)$ columns and $O(N)$ non-zero entries.
Since $\newmatrix$ has $m$ rows, the rank of $\newmatrix$ is bounded by $m$. This immediately suggests an algorithm for computing minimum homology basis analogous to~\Cref{alg:cyclebasis}.

\begin{algorithm}[H]
\caption{Algorithm for minimum homology basis}\label{alg:hombasis}
\begin{algorithmic}[1]
\vspace{1.5mm}

\State{Compute the sorted list of tight cycles in $\boldT$, and assemble matrix $\newmatrix$.}
\State{Compute the column rank profile $[j_{1},j_{2}, \dots ,j_{b}, i_{1},i_{2}, \dots ,i_{g}]$ of $\newmatrix$ using Storjohann and Yang's algorithm~\cite{Storjohann15}, where columns $\{\newmatrix_{j_k}\}$ and $\{\newmatrix_{i_\ell}\}$ are linearly independent columns of $\partial_2$ and  $\boldT$ respectively.}
\State{$\textsc{Return } \textnormal{Columns  }\{\newmatrix_{i_1}, \newmatrix_{i_2}, \dots, \newmatrix_{i_g}\}$.}

\end{algorithmic}
\end{algorithm}

\begin{theorem} \label{thm:algtwo}  \Cref{alg:hombasis} is a Monte Carlo algorithm for computing a  minimum homology basis that runs in  $\tilde{O}(m^\omega)$ time with failure probability at most $\frac{1}{2}$.
\end{theorem}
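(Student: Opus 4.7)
My plan is to establish correctness by showing that the column rank profile of $\newmatrix$ implements exactly the greedy selection rule of~\Cref{thm:keyresult}, and then to bound the runtime by combining the sparsity and low rank of $\newmatrix$ with the Storjohann--Yang rank profile algorithm.

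For correctness, I would first observe that since the tight cycles contain a cycle basis of $K_1$ by~\Cref{thm:horton}, the columns of $\newmatrix = [\partial_2 \mid \tightmatrix{K_1}]$ span $\zone$, so $\newmatrix$ has rank $\nu = b + g$. Because the $\partial_2$-block precedes the $\tightmatrix{K_1}$-block in $\newmatrix$ and $\partial_2$ has rank $b$, the column rank profile must take the form $[j_1,\dots,j_b,i_1,\dots,i_g]$, with the first $b$ indices selecting a basis of $\bone$ and the tight-cycle indices satisfying $i_1 < \dots < i_g$ and each $[C_{i_k}]$ linearly independent of $\{[C_{i_1}],\dots,[C_{i_{k-1}}]\}$ in $\hone$, with $i_k$ smallest having that property. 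Consequently $\{[C_{i_1}],\dots,[C_{i_g}]\}$ is a basis of $\hone$.

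To argue that this homology basis has minimum weight, I would show by a short induction (mirroring the base case in~\Cref{lem:basecase}) that $\{C_{i_1},\dots,C_{i_g}\}$ coincides with the greedy subset of the earliest basis $\minCycleBasis$ of $\tightmatrix{K_1}$ identified in~\Cref{thm:keyresult}. The key inductive step is: if some $C_{i_{k+1}} \notin \minCycleBasis$, then by the earliest-basis property $C_{i_{k+1}}$ is a $\zone$-combination of $\{C_l : l < i_{k+1}\}$; each such $[C_l]$ is either null-homologous or already lies in the span of $\{[C_{i_1}],\dots,[C_{i_k}]\}$ by the minimality of $i_{k+1}$, forcing $[C_{i_{k+1}}]$ into the same span and contradicting the choice of $i_{k+1}$. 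Since the earliest basis of $\tightmatrix{K_1}$ is a minimum cycle basis (cf.\ \Cref{thm:algone}), \Cref{thm:keyresult} then delivers that $\{C_{i_1},\dots,C_{i_g}\}$ is a minimum homology basis of $K$.

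For the runtime, Step~1 computes the sorted tight cycles in $O(nm \log m)$ time using Amaldi et al.'s algorithm together with~\Cref{thm:totallength}, and assembles $\newmatrix$ in time proportional to $\nnz{\newmatrix}$. Step~2 invokes~\Cref{thm:storyang}, which runs in $\tilde{O}(\nnz{\newmatrix} + r^\omega)$ time. Now $\newmatrix$ has $m$ rows, so $r \leq m$; \Cref{thm:totallength} bounds the non-zeros in the $\tightmatrix{K_1}$-block by $n\nu \leq m^2$; and the Kruskal--Katona inequality gives $n_2 = O(m^{3/2})$, contributing $O(m^{3/2})$ non-zeros from $\partial_2$. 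Hence $\nnz{\newmatrix} = O(m^2)$, and since $\omega > 2$ the rank-profile step costs $\tilde{O}(m^\omega)$, which dominates the total. The failure probability bound of $\nicefrac{1}{2}$ is inherited directly from~\Cref{thm:storyang}. The main delicate point is the inductive correctness argument that equates the column rank profile selection with the greedy rule in~\Cref{thm:keyresult}; the runtime and probability bounds are routine applications of the cited results.
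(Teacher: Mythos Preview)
Your proof is correct and follows the same approach as the paper: correctness via \Cref{thm:keyresult} applied to the column rank profile of $\newmatrix$, and runtime via \Cref{thm:storyang} combined with the sparsity and low rank of $\newmatrix$. The paper is terser---it invokes \Cref{thm:keyresult} directly without your intermediate step arguing that the selected tight cycles lie in the earliest basis of $\tightmatrix{K_1}$, and it bounds $\nnz{\newmatrix}$ by $O(N+n\nu)$ (asserting $N\in\tilde{O}(m^\omega)$) rather than appealing to Kruskal--Katona---but the overall strategy is identical.
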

\begin{proof} The correctness of the algorithm is an immediate consequence of Corollary~\ref{cor:maincor}. 
As noted in \Cref{sec:matroid}, the nontrivial cycles of a complex form a weighted matroid. \changed{Since the cycles in $\boldT$ are sorted by weight, and since the tight cycles of the $1$-skeleton of the complex is guaranteed to contain a minimum homology basis, from \Cref{rem:opt}, the column rank profile of $\boldT$ is the greedy algorithm that returns a minimum homology basis.}

The  list of tight cycles in $G$ can be computed in $O(n m)$ time using the algorithm described in Section~2 of \cite{AmaldiESA}. Hence, Step~1 of \Cref{alg:hombasis} takes $O(n m \log n)$ time.
Moreover, using \Cref{thm:storyang}, the complexity of  Step~2 is bounded by $\tilde{O}(N+n \nu + m^\omega)$, which is the same as $\tilde{O}( m^\omega)$ since $N$ and $n \nu$ are both in $\tilde{O}(m^\omega)$, and the failure probability is at most $\nicefrac{1}{2}$. 
\end{proof} 

When the number of $2$-simplices in complex $K$ is significantly smaller than the number of edges, the complexity for minimum homology can be slightly improved by decoupling the minimum homology basis computation from the minimum cycle basis computation, as illustrated in~\Cref{alg:hombasistwo}. 

\begin{algorithm}[H]
\caption{Algorithm for minimum homology basis}\label{alg:hombasistwo}
\begin{algorithmic}[1]
\vspace{1.5mm}

\State{Compute a minimum cycle basis $\minCycleBasis$ of  $K_1$ using the Monte Carlo algorithm by Amaldi et al.~\cite{AmaldiESA}. Let $\boldM$ be the matrix whose columns are cycle vectors in $\minCycleBasis$ sorted by weight. }
	\State{Assemble the matrix $\newmatrixtwo = [\partial_2 \,\, | \,\, \boldM] $. }
	\State{Compute the column rank profile $[j_{1},j_{2}, \dots ,j_{b}, i_{1},i_{2}, \dots ,i_{g}]$ of $\newmatrixtwo$ using the deterministic algorithm by Jeannerod et al.~\cite{JEANNEROD201346}, where columns $\{\newmatrixtwo_{j_k}\}$ and
	 $\{\newmatrixtwo_{i_\ell}\}$ are linearly independent columns of $\partial_2$
 and $\boldM$ respectively.}
	\State{$\textsc{Return } \textnormal{Columns  }\{\newmatrixtwo_{i_1}, \newmatrixtwo_{i_2}, \dots, \newmatrixtwo_{i_g}\}$.}

\end{algorithmic}
\end{algorithm}

\begin{theorem} \label{thm:algthree} \changed{A }minimum homology basis  can be computed  in \changed{$O(N m^{\omega-1})$} time using the Monte Carlo algorithm described in  \Cref{alg:hombasistwo}.  
The algorithm fails with  probability at most  $\nu\log(nm) \, 2^{-k}$, where  $k = m^{0.1}$.
\end{theorem}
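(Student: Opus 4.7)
The plan is to mirror the structure used for Algorithm~\ref{alg:hombasis} in Theorem~\ref{thm:algtwo}, but account for the fact that Algorithm~\ref{alg:hombasistwo} first reduces the candidate cycles from the full tight cycle matrix to a minimum cycle basis, which is much smaller. For correctness, I would invoke Theorem~\ref{thm:keyresult} applied to the minimum cycle basis $\minCycleBasis$ produced in Step~1. Because the columns of $\partial_2$ appear first in $\newmatrixtwo$, the column rank profile first selects a maximal linearly independent subset of boundaries, and then walks through the columns of $\mathbf{B}_\minCycleBasis$ in the sorted order, at each step picking a cycle iff it is not homologous to any cycle already selected. By Theorem~\ref{thm:keyresult}, the selected cycles $\{\newmatrixtwo_{i_1},\dots,\newmatrixtwo_{i_g}\}$ form a minimum homology basis.

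For the runtime, I would analyze the three nontrivial steps separately. Step~1 invokes the Monte Carlo algorithm of Amaldi et al.~\cite{AmaldiESA}, which runs in $O(m^\omega)$ time. Step~2 is dominated by Step~3: the matrix $\newmatrixtwo$ has $m$ rows, $n_2 + \nu = O(N)$ columns, and rank $r \leq m$ since there are only $m$ rows. Applying Theorem~\ref{thm:deterprofile} (Jeannerod et al.) with these parameters yields a column rank profile computation in $O(m \cdot N \cdot m^{\omega-2}) = O(N m^{\omega-1})$ time. Summing gives $O(m^\omega + N m^{\omega-1})$, as claimed.

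For the failure probability, the only randomized component is Step~1, since the rank profile algorithm of Theorem~\ref{thm:deterprofile} is deterministic. The Amaldi et al.\ algorithm performs $\nu$ iterations, and in each iteration does a binary search of length $O(\log |\tightCycles|) = O(\log(nm))$ over the sorted list of tight cycles to locate a minimum-weight cycle $C_i$ with $\langle C_i, S_i\rangle = 1$. Each comparison is a Monte Carlo inner-product check whose error can be driven down to $2^{-k}$ by running $k$ independent trials. Choosing $k = m^{0.1}$ and applying a union bound over all $\nu \log(nm)$ comparisons bounds the overall failure probability by $\nu \log(nm) \cdot 2^{-k}$.

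The main obstacle is not really technical; the delicate point is confirming that the cost of Step~1, in which $\minCycleBasis$ is produced sorted by weight, fits within $O(m^\omega)$ and that the output is indeed a minimum cycle basis sorted so that Theorem~\ref{thm:keyresult} applies directly. Once that is in hand, the rest is a straightforward composition of Theorems~\ref{thm:keyresult} and \ref{thm:deterprofile} with a union bound on the randomized rounds of Step~1.
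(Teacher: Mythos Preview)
Your proposal is correct and follows essentially the same approach as the paper: correctness via Theorem~\ref{thm:keyresult}, the $O(m^\omega)$ bound for Step~1 from Amaldi et al., the $O(Nm^{\omega-1})$ bound for Step~3 from Theorem~\ref{thm:deterprofile} (using that $\newmatrixtwo$ has $m$ rows, $O(N)$ columns, and rank at most $m$), and the failure probability inherited solely from the randomized Step~1. Your explanation of the failure bound via a union bound over the $\nu\log(nm)$ Monte Carlo checks is more explicit than the paper, which simply cites Theorem~3.2 of~\cite{AmaldiESA}, but the content is the same.
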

\begin{proof} As in \Cref{thm:algtwo}, the correctness of the algorithm is an immediate consequence of Corollary~\ref{cor:maincor} \changed{and \Cref{rem:opt}}.
The algorithm fails only when Step~1 returns an incorrect answer, the probability of which is as low as $\nu\log(nm) \, 2^{-k}$, where $k = m^{0.1}$, see Theorem~3.2 of~\cite{AmaldiESA}.

The minimum cycle basis algorithm by Amaldi et al.~\cite{AmaldiESA} runs in $O(m^\omega)$ time (assuming the current exponent of matrix multiplication $\omega > 2$). 
Furthermore, using \Cref{thm:deterprofile}, the complexity of  Line~3 is bounded by $O(Nm^{\omega-1})$.
So, the overall complexity of the algorithm is $O(m^\omega + N m^{\omega-1})$ \changed{= $O(Nm^{\omega-1})$}. 
\end{proof}

Note that in Line~3 of \Cref{alg:hombasistwo},  it is possible to replace the deterministic algorithm by Jeannerod et al.~\cite{JEANNEROD201346} with the Monte Carlo algorithm by Storjohann and Yang's algorithm~\cite{Storjohann15}. In that case, the complexity of the algorithm will once again be $\tilde{O}(m^\omega)$, and the failure probability will be at most $1-  \frac{1}{2}(1-\nu\log(nm)2^{-k})$.


\section{A \changed{$g$-sensitive} algorithm for minimum homology basis} \label{sec:outsense}

In this section, we describe a randomized \changed{$g$-sensitive} algorithm for computing minimum homology basis whose complexity  depends on the value of $g$. Specifically, when $g=O(1)$, \Cref{alg:hombasisthree} computes the minimum homology basis of a complex correctly in nearly quadratic time.
To begin with, note that using Corollary~\ref{cor:maincor}, we know that the tight cycles of the $1$-skeleton $\complex_1$ of a complex $\complex$ contains a minimum homology basis of $\complex$. In this algorithm, the tight cycles of $\complex_1$ are maintained in a matrix denoted by $\boldT$. Specifically, the tight cycles are maintained in the columns of $\boldT$ and are sorted by weight.  Essentially, \Cref{alg:hombasisthree} builds a matrix $\boldB$ containing the minimum homology basis by iteratively finding a lexicographically smallest cycle in $\boldT$ that is linearly independent of the current set of cycles stored in $\boldB$.  It uses binary search each time to locate the lexicographically smallest linearly independent cycle. To make the search procedure efficient, randomization is used. Wiedemann's black box algorithms are used twice in \Cref{alg:hombasisthree}, first, in Line~\ref{lin:wiedemannone} to estimate $g$, and then in Line~\ref{lin:wiedemanntwo}  to check if a randomly selected cycle $\boldw$ is linearly independent of the basis cycles assembled in matrix $\boldB$.

Throughout this section, for some indices $i,j$, $\boldT[i]$ represents the $i$-th column of $\boldT$, and $\boldT[i\dots j]$ represents the submatrix of $\boldT$ formed by choosing columns $i$ through $j$.

\changed{Recall} that there are $g$ cycles in any homology basis.  In \Cref{alg:hombasisthree}, the \changed{outer} \textbf{for} loop in Lines~\ref{lin:bigforloopin}-\ref{lin:bigforloopout} runs $g'$ times, finding one linearly independent cycle in each iteration. Here, $g'=g$ with high probability since Wiedemann's algorithm computes the rank of a matrix correctly with high probability.
The \textbf{while} loop  in \Cref{alg:hombasisthree} uses a modification of binary search to find the lexicographically smallest cycles  that are linearly independent of cycles in the  matrix $\boldB$.
Suppose that we have some probabilistic guarantee that the first $k$ cycles in the minimum homology basis are correctly computed. Now, if $\boldT[\ell\dots \middling]$ has a cycle that is linearly independent of the cycles in $\boldB$, then in the \changed{probability amplification} for loop of Lines~\ref{lin:forloopin}-\ref{lin:forloopout} such a cycle is identified correctly with probability at least $\left(1 - \frac{1}{m^2}\right)$ (See Lemma~\ref{lem:msquare}).  On the other hand, if $\boldT[\ell\dots \middling]$ does not have a cycle that is linearly independent of the cycles in $\boldB$, then in the \textbf{if} condition of Line~\ref{lin:ifcond}, a  cycle that is linearly dependent on cycles in $\boldB$ is misidentified as linearly  independent with probability at most $\frac{1}{m^2}$ (See Lemma~\ref{lem:msquaretwo}). If a linearly independent cycle is successfully identified in $\boldT[\ell\dots \middling]$, then the search interval is halved by setting $r \gets  \left\lfloor \frac{\ell+r}{2}
\right\rfloor$ (See Line~\ref{lin:setr}). On the other hand, if the algorithm fails to find a linearly independent cycle in $\boldT[\ell\dots \middling]$, then in the next iteration of the \textbf{while} loop, the search interval is halved by setting $\ell \gets \left\lfloor \frac{\ell+r}{2}
\right\rfloor + 1$   (See Line~\ref{lin:setl}).

If we have narrowed down the search of the next  cycle in the basis to $\boldT[\ell]$ and  $\ell = r$ but $\boldT[\ell]$ is linearly dependent on cycles in $\boldB$, then the algorithm has clearly failed and is therefore terminated (See the \textbf{if} condition in Line~\ref{lin:iffail}). On the other hand, if we have narrowed down the search of the next  cycle in the basis to $\boldT[\ell]$ and  $\ell = r$ where $\boldT[\ell]$ is linearly independent of cycles in $\boldB$, then we add $\boldT[\ell]$ to $\boldB$ and initiate the search for the next cycle in the basis (See the \textbf{if} condition in Line~\ref{lin:ifsuccess}).





\cancel{
\begin{algorithm}[H]
\begin{algorithmic}[1]
\vspace{1.5mm}

\State{Compute a  minimum cycle basis of $\complex_1$ in matrix $\boldM$}



\State{Compute $g$ using Wiedemann's algorithm}
\amritendu{variable m denotes #edges and mid of binary search, ideally renamed.}
\State{$\ell \gets 1$; \quad $r\gets m-n+1$}

\For{$i = 1$ to $g$}
\While{$\ell \leq r$}
\State{$m\gets \left\lfloor \frac{\ell+r}{2}
\right\rfloor$; \quad $\found \gets 0$}
\For{$j = 1$ to $2 \log m$ } \label{lin:forloopin}
\State{Let $\boldv$ be a random $0$-$1$ vector of size $m-\ell +1$}
\State{$\boldw \gets \boldM[\ell \dots \middling] \cdot \boldv$}
\If{$\partial_2 \cdot \boldx = \boldw$ does not have a solution}
\If{ $i=1$ or ($i>1$ \quad \& \quad $\boldw ^T \boldB_{i-1}^{+} 
\boldB_{i-1}\neq \boldw^T)$}
\State{$r \gets m-1$; \quad $\found \gets 1$}
\State{$\foundIndex \gets m$}
\State{Exit the \textbf{for} loop in Lines~\ref{lin:forloopin}--\ref{lin:forloopout}}
\EndIf
\EndIf
\EndFor \label{lin:forloopout}
\If{$\found = 0$}
\State{$\ell \gets m + 1$}
\Else
\If{$r < \ell$}
\State{$\ell \gets \foundIndex + 1$}
\State{$r\gets m-n+1$}
\If{$i=1$}
\State{$\boldB_1 = \boldw^T$} 
\State{$(\boldB_1\boldB_1^T)^{-1} =(\boldw^T\boldw)^{-1} $}
\State{$\boldB_1^+ = \dfrac{\boldw}{\boldw^T\boldw}$ }
\Else
\State{$\boldB_i = \begin{pmatrix} \boldB_{i-1} \\
\boldw \end{pmatrix}$}
\State{$\boldu_i=\dfrac{\boldw^T(\boldB_{i-1}\boldB_{i-1}^T)^{-1}}{1+\boldw^T(\boldB_{i-1}\boldB_{i-1}^T)^{-1}\boldw}$}
\State{$(\boldB_i\boldB_i^T)^{-1}=(\boldB_{i-1}\boldB_{i-1}^T)^{-1}(\mathbf{I}-\boldw\boldu_i)$}
\State{$\boldB^+_{i} = \begin{pmatrix}\boldB^+_{i-1}(\mathbf{I}-\boldw\boldu_i)\\\boldu_i\end{pmatrix}$}
\EndIf
\EndIf
\EndIf
\EndWhile
\EndFor

\Statex

\Procedure{CheckBasisExtension}{$\ell, u$}
\EndProcedure

\caption{Algorithm for minimum homology basis}\label{alg:hombasistwo}
\end{algorithmic}
\end{algorithm}
}

\begin{algorithm}[h]
\begin{algorithmic}[1]
\vspace{1.5mm}

\State{Compute the set of tight cycles of $\complex_1$ in matrix $\boldT$ sorted by weight.} \label{lin:tightcycles}



\State{Use Wiedemann's algorithm to compute $\rk (\partial_1)$. Let $z_1 = m - \rk (\partial_1)$. Next, use Wiedemann's algorithm to compute $b_1 = \rk (\partial_2)$. Then, $g' \gets  z_1 - b_1$.} \label{lin:wiedemannone}
\LineComment{Wiedemann's algorithm computes rank correctly with high probability. Hence, $g'=g$ with high probability.}

\State{Initialize $\boldB$ with the empty matrix.}

\State{$\ell \gets 1$; \quad $r\gets m-n+1$;}

\For{$i = 1$ to $g'$} \Comment{\changed{Outer for loop}}\label{lin:bigforloopin}
\State{$k \gets 0$;} \Comment{The variable $k$ is not used in the algorithm; but is used in the analysis.}
\While{$\ell \leq r$} \Comment{The while loop is used for binary search.}
\label{lin:whilein}
\State{$\middling\gets \left\lfloor \frac{\ell+r}{2}
\right\rfloor$; \quad $\found \gets \changed{\texttt{false}}$;}
\For{\changed{$2 \log m$ times }} \Comment{\changed{Probability amplification for loop}}\label{lin:forloopin}
\State{Let $\boldv$ be a uniformly random $0$-$1$ vector of size $(\middling-\ell +1)$} 
\State{$\boldw \gets \boldT[\ell \dots \middling] \cdot \boldv$} \label{lin:selectrandom}
\State \Longunderstack[l]{ Use $2 \log m$ independent runs of Wiedemann's algorithm to \\ determine whether $[ \boldB \,\, | \,\, \partial_2 ]\cdot \boldx = \boldw$ has a solution} \label{lin:wiedemanntwo}
\If{ a solution to $[ \boldB \,\, | \,\, \partial_2 ]\cdot \boldx = \boldw$ is not found in any of the runs} \label{lin:ifcond}
 
\State{$r \gets \middling$; \quad $\found \gets \changed{\texttt{true}}$} \label{lin:setr}
\State{$\foundIndex \gets \middling$}
\State{Exit the \textbf{for} loop in Lines~\ref{lin:forloopin}--\ref{lin:forloopout}}
 
\EndIf
\EndFor \label{lin:forloopout}
\If{$\found = \changed{\texttt{false}}$} 

\If{$\ell=r$} \label{lin:iffail}
    \State{Print \enquote{\texttt{Algorithm failed.}}} \Comment{\changed{Binary search fails. Cycle not found.}}
    \State{\textbf{return};}
\Else
\State{$\ell \gets \middling + 1$} \label{lin:setl}
\EndIf
\Else \Comment{Binary search successful. Linearly independent cycle found.}
\If{$\ell = r $} \label{lin:ifsuccess}
\State{$\ell \gets \foundIndex + 1$}
\State{$r\gets m-n+1$}
\State{$\boldB \gets [\boldB \,\, | \,\,\boldw]$}
\State{Exit the \textbf{while} loop}
\EndIf
\EndIf
\State{$k \gets k+1$}
\EndWhile \label{lin:whileout}
\EndFor \label{lin:bigforloopout}

\caption{Algorithm for minimum homology basis}\label{alg:hombasisthree}
\end{algorithmic}
\end{algorithm}

\changed{
\begin{notation}
    Given an $m\times n$ matrix $\boldA$ and a $n$-dimensional column vector $\boldx$, the matrix vector product $\boldA \cdot \boldx$ to be equal to the vector $\sum_{i=1}^n \boldA_i \boldx_i$.
\end{notation}
}

\begin{lemma} \label{lem:half}
If there exists a column vector in $\boldT[\ell\dots \middling]$ that is linearly independent of column vectors of $[\boldB \mid \partial_2]$, then the probability that the vector $\boldw$  chosen in Line~\ref{lin:selectrandom} is such that the system of equations $ [ \boldB \,\, | \,\, \partial_2 ]\cdot \boldx = \boldw$ in Line~\ref{lin:wiedemanntwo}  does not have a solution is at least  $\frac{1}{2}$.    
\end{lemma}
\begin{proof} 
    We begin with noting that $\boldw$ is set to $\boldT[\ell \dots \middling] \cdot \boldv$ in \Cref{lin:selectrandom}.
    
    We will prove the claim in the lemma by induction.
    Let $i_1$ be the smallest index for which the column $\boldT[i_1]$ is not in the column space of $ [ \boldB \,\, | \,\, \partial_2 ]$. Then, setting $\boldv[i_1]$ to $1$ and $\boldv[j]$ to either $0$ or $1$ for $j\in \{ \ell,\dots,\changed{i_1 -1} \}$,  we obtain a set of linear \changed{combinations} of columns in $\boldT[\ell\dots i_1]$, denoted by $S_{i_1}$, which do not lie in the column space of $ [ \boldB \,\, | \,\, \partial_2 ]$. Since $\changed{|S_{i_1}| = 2^{i_1 - \ell }}$, half of the linear combinations of the first $i_1-\ell$ columns of $\boldT$ generate a column that does not belong to the column space of $[\boldB \mid \partial_2]$. This completes the base case of the induction.

  For the inductive hypothesis, assume that for some $i > i_1$, at least half of the \changed{$2^{i-\ell+1}$} linear combinations of columns of $\boldT[\ell\dots i]$ using the first \changed{$i-\ell + 1$} indices generate a column that is not in the column space of  $ [ \boldB \,\, | \,\, \partial_2 ]$. Denote this set of linear combinations by $S_i$. Denote the complementary set of linear combinations by $\overline{S}_i$. Said differently, $\changed{|S_i| \geq 2^{i- \ell }}$ and $\changed{|S_i \cup \overline{S}_i| = 2^{i-\ell+1}}$.
  
  Then, we have two cases: either $\boldT[i+1]$ is in the column space of $ [ \boldB \,\, | \,\, \partial_2 ]$ or not. 
  \begin{enumerate}[(1.)]
      \item  If it is in the column space of $ [ \boldB \,\, | \,\, \partial_2 ]$, then $S_{i+1}$ is obtained by extending the combinations in $S_{i}$ by setting $\boldv[i+1]$ to be either $0$ or $1$.
      \item  If it is not in the column space of $ [ \boldB \,\, | \,\, \partial_2 ]$, then the combinations in $S_{i+1}$ that generate a column that is not in the column space of $[\boldB \mid \partial_2]$ are obtained by 
       \begin{inparaenum}[(a.)]
      \item  extending combinations in $\overline{S}_i$ by setting $\boldv[i+1]=1$, 
      \item \changed{extending all combinations in $S_i$ by setting $\boldv[i+1] = 0$, and some combinations  in $S_i$ by setting $\boldv[i+1] = 1$ (if linear independence is preserved).} 
      \end{inparaenum}
  \end{enumerate}
  For every combination in $S_{i+1}$, the vector $\boldv[\ell\dots i+1]$ picks a column from $\boldT$ that is not in the column space of $ [ \boldB \,\, | \,\, \partial_2 ]$.
  Note that in both cases (1.) and (2.), \changed{$|S_{i+1}| \geq 2^{i-\ell+1}$} assuming \changed{$|S_i| \geq 2^{i-\ell}$}. The claim follows by noting that $\boldv$ is selected uniformly at random.
\end{proof}

\begin{lemma} \label{lem:msquare}
If there exists a column vector in $\boldT[\ell\dots \middling]$ that is linearly independent of column vectors of $[\boldB \mid \partial_2]$, then the \textbf{if} condition in Line~\ref{lin:ifcond} fails to satisfy in each of the $2 \log m$ iterations of the \changed{probability amplification} \textbf{for} loop (of Lines~\ref{lin:forloopin}-\ref{lin:forloopout}) with probability at most   $\frac{1}{m^2}$.   
\end{lemma}
\begin{proof}
    By Lemma~\ref{lem:half}, the failure probability of one iteration of the \changed{probability amplification} \textbf{for} loop is at most $\frac{1}{2}$. Since the \textbf{for} loop of Lines~\ref{lin:forloopin}-\ref{lin:forloopout} is executed $2 \log m$ times, and each time the vector $\boldw$ is chosen independently,  the \textbf{if} condition fails to satisfy with probability  at most $2^{-2 \log m} = \frac{1}{m^2}$.
\end{proof}

\begin{lemma} \label{lem:msquaretwo}
If there does not exist a column vector in $\boldT[\ell\dots \middling]$ that is linearly independent of column vectors of $[\boldB \mid \partial_2]$, then the \textbf{if} condition in Line~\ref{lin:ifcond} is satisfied with probability at most  $\frac{1}{m^2}$.   
\end{lemma}
\begin{proof}
By assumption, there does not exist a column vector in $\boldT[\ell\dots \middling]$ that is linearly independent of column vectors of $[\boldB \mid \partial_2]$. Suppose that in one of the iteration of the \changed{probability amplification} \textbf{for} loop (of Lines~\ref{lin:forloopin}-\ref{lin:forloopout})
all $2 \log m$ runs of the Wiedemann's algorithm fail to find a solution to  $[ \boldB \,\, | \,\, \partial_2 ]\cdot \boldx = \boldw$. The failure probability of one run of Wiedemann's algorithm (Line~\ref{lin:wiedemanntwo}) is at most $\frac{1}{2}$. Hence, the probability that $2 \log m$ runs fail to find a solution even when one exists is at most $2^{-2 \log m} = \frac{1}{m^2}$.
\end{proof}

 Let $\zeta_1$ be the lexicographically  smallest  nontrivial cycle in $\boldT$, and for every $i \in \{2,\dots,g\}$ let $\zeta_i$ be the lexicographically smallest cycle  in  $\boldT$ that is linearly independent of cycles $\zeta_j$ for $j\in [i-1]$. Let $\mathcal{E}_i$ be the event that $\boldB[i] = \zeta_i$. Also, let $\mathcal{E}_0$ be the event that $g' = g$.

 \begin{lemma} \label{lem:e1}
    $\prob[\mathcal{E}_1 
     \mid \mathcal{E}_0] \geq (1-\frac{1}{m^2})^{\left\lceil \log \nu\right\rceil}$.
\end{lemma}
\begin{proof}
For $i = 1$, in the \changed{outer} \textbf{for} loop of Lines~\ref{lin:bigforloopin}-\ref{lin:bigforloopout}, we denote the values of $\ell$, $\middling$ and $r$ in the $k$-th iteration of the \textbf{while} loop by $\ell_k$, $\middling_k$ and $r_k$, respectively.
Note that $\ell_0 = 1$ and $r_0 = m-n+1$. For $k \in \{2,\dots,\lceil \log \nu \rceil\}$, when $i=1$ and $k>1$, if the \textbf{if} condition \changed{in Line~\ref{lin:ifcond}} is successful, then  $\ell_k = \ell_{k-1}$ and $r_k = \middling_{k-1}$ (see Line~\ref{lin:setr}), else   $\ell_k = \middling_{k-1} + 1$ (see Line~\ref{lin:setl}) and $r_k = r_{k-1}$.
    The algorithm uses a modification of binary search to find the lexicographically smallest indexed column of $\boldT$ that does not lie in the column space of $[\boldB \mid \partial_2]$ (which is the same as $[\partial_2]$ when $i=1$).  

    Assuming $\mathcal{E}_0$ is satisfied, we have $g=g' \geq 1$.
    Clearly, $\zeta_1 \in \boldT[\ell_{0},r_{0}]$.
    Suppose that after $k-1$ iterations, the probability that $\zeta_1 \in \boldT[\ell_{k-1},r_{k-1}]$ is at least $(1-\frac{1}{m^2})^{k-1}$.
    If $\zeta_1 \in \boldT[\ell_{k-1},\middling_{k-1}]$, then using Lemma~\ref{lem:msquare}, $r_k = \middling_{k-1}$ and $\ell_k = \ell_{k-1}$ with probability at least $1 - \frac{1}{m^2}$. On the other hand, if $\zeta_1 \in \boldT[\middling_{k-1}+1,r_{k-1}]$, then  using Lemma~\ref{lem:msquaretwo}, $\ell_k = \middling_{k-1} + 1$ and $r_k = r_{k-1}$ with probability at least $1 - \frac{1}{m^2}$. In either case, the probability that $\zeta_1 \in \boldT[\ell_{k},r_{k}]$ is at least $(1-\frac{1}{m^2})^{k}$. 
    
    In every iteration of the \textbf{while} loop, the size of the search interval reduces by half. Since the total number of columns in $\boldT$ is $\nu$, 
    $\prob[\mathcal{E}_1 \mid \mathcal{E}_0]$ is at least $(1-\frac{1}{m^2})^{\left\lceil \log \nu\right\rceil}$.
\end{proof}   

\begin{lemma} \label{lem:ei}
    $\prob[\mathcal{E}_i \mid \cap_{j=0}^{i-1} \mathcal{E}_j] \geq (1-\frac{1}{m^2})^{\left\lceil \log \nu\right\rceil}$.
\end{lemma}
\begin{proof}
    The proof is nearly identical to Lemma~\ref{lem:e1}.
\end{proof}

We now recall a useful inequality from Motwani and Raghavan's book~\cite{motwani}.

\begin{proposition}[Proposition B.3 \cite{motwani}] \label{prop:mainprop}
For all $t,n\in\mathbb{R}$ such that $n\geq1$ and $t\le n$,
\[
e^{t}\left(1-\frac{t^{2}}{n}\right)\leq \left(1+\frac{t}{n}\right)^{n}.
\]
\end{proposition}

\begin{theorem} \label{thm:lastmain}
    \Cref{alg:hombasisthree} correctly computes the minimum homology basis with probability at least $\frac{1}{4} e^{-1}\left(1-\frac{1}{m^{2}}\right)$.
\end{theorem}

\begin{proof}
To begin with, note that Wiedemann's algorithm~\cite{Wiedemann} for computing the rank of a matrix has success probability at least $\frac{1}{2}$. Hence, $\prob[\mathcal{E}_0] \geq \frac{1}{4}$.

Using $\prob[\cap_{i=0}^{g'} \mathcal{E}_i] = \prob[\mathcal{E}_0] \times \prob[\mathcal{E}_1 \mid \mathcal{E}_0] \times \prob[\mathcal{E}_2\mid \mathcal{E}_1\cap \mathcal{E}_0]\times \dots \times \prob[\mathcal{E}_{g'} \mid \cap_{j=0}^{g'-1} \mathcal{E}_j]$ and Lemmas~\ref{lem:e1} and \ref{lem:ei}, we deduce that

\[\prob[\cap_{i=0}^{g'} \mathcal{E}_i] \geq \frac{1}{4}\left(1-\frac{1}{m^{2}}\right)^{\left\lceil \log\nu\right\rceil {g'}}. \]

    From  Proposition~\ref{prop:mainprop}, 
\[
e^{-1}\left(1-\frac{1}{m^{2}}\right)\leq\left(1+\frac{(-1)}{m^{2}}\right)^{m^{2}}\leq\left(1-\frac{1}{m^{2}}\right)^{\left\lceil \log\nu\right\rceil {g'}}
\]

Hence, the probability that  \Cref{alg:hombasisthree} correctly computes the minimum homology basis is given by $\frac{1}{4} \cdot e^{-1}\left(1-\frac{1}{m^{2}}\right)$. 
\end{proof}

\begin{theorem}
    \Cref{alg:hombasisthree} runs in $\tilde{O}(N^2 g + N m g{^2} + m g{^3})$ time.
\end{theorem}
\begin{proof}

The  list of tight cycles in $G$ can be computed in $O(n m)$ time using the algorithm described in Section~2 of \cite{AmaldiESA}. Hence, Line~1 of \Cref{alg:hombasisthree} takes $O(n m \log (n m)) = O(n m \log n )$ time.

Since $\partial_1$ and $\partial_2$ have $O(N)$ nonzero entries, rank computations, from \Cref{rem:wiedrem}, using Wiedemann's algorithm~\cite{Wiedemann} in Line~\ref{lin:wiedemannone} take $\tilde{O}(N^2)$ time.

We note that if $g' > g$, then in the \changed{outer} \textbf{for} loop of Lines~\ref{lin:bigforloopin}-\ref{lin:bigforloopout}, for $i=g+1$, the binary search will fail and the \textbf{if} condition in Line~\ref{lin:iffail} will be satisfied, and the algorithm will terminate. Therefore,  we can state the complexity analysis in terms of $g$ instead of $g'$.

   By \Cref{thm:totallength}, the total length of tight cycles of $\complex_1$ is at most $n \nu = O(nm)$. Using a sparse matrix representation of $\boldM$, Line~\ref{lin:selectrandom} takes $O(nm)$ time.
     From \Cref{rem:wiedrem}, we know that a single run of Wiedemann's algorithm takes $\tilde{O}((N+mg)\cdot(N+ g) ) = \tilde{O}(N^2 + N m g + m g{^2})$ time since $Ng = O(N^2)$. So, $\log m$ runs of Wiedemann's algorithm in Line~\ref{lin:wiedemanntwo} takes $\tilde{O}((N^2 + N m g + m g{^2}) 2 \log m) = \tilde{O}(N^2 + N m g + m g{^2})$ time. The \changed{outer} \textbf{for} loop in Lines~\ref{lin:bigforloopin}-\ref{lin:bigforloopout} runs at most $g+1$ times. The \textbf{while} loop in Lines~\ref{lin:whilein}-\ref{lin:whileout} runs at most $\log m$ times. The \changed{probability amplification} \textbf{for} loop in Lines~\ref{lin:forloopin}-\ref{lin:forloopout} runs $2 \log m$ times.
    Line~\ref{lin:wiedemanntwo} is executed at most $O(2 g \log^2 m)$ times.
    Since Line~\ref{lin:wiedemanntwo} is the most expensive step in the \changed{probability amplification} \textbf{for} loop from Lines~\ref{lin:forloopin}-\ref{lin:forloopout}, the complexity of the algorithm is $\tilde{O}(N^2 g + N m g{^2} + m g{^3})$.
\end{proof}

Note that when $g=O(1)$,  \Cref{thm:totallength} runs in nearly quadratic time.  

\section{\changed{Runtime comparison}} \label{sec:compare}
\changed{
Recall that \Cref{alg:hombasis} runs in $\tilde{O}(m^\omega)$,  \Cref{alg:hombasistwo} runs in $O(N m^{\omega-1})$ time and \Cref{alg:hombasisthree} runs in $\tilde{O}(N^2 g + N m g{^2} + m g{^3})$ time, where $n$ is the number of vertices, $m$ is the number of edges and $N$ is the total number of simplices.}

\changed{Note that Dey et al. prove a bound of  $O(N^{\omega} + N^{2}g)$ on the running time of their algorithm~\cite[Section 3.2]{DeyLatest}. However, a more refined analysis shows that the algorithm described in Dey et al. \cite{DeyLatest} runs in $O(nmg + Nm^{\omega-1})$ time. This is because the annotation algorithm  takes $O(Nm^{\omega-1})$ time in the worst case, whereas the \textsc{ShortestCycle} procedure in \cite{DeyLatest} takes $O(nm)$ time. Using the recurrence relation in \cite[Section 3.2]{DeyLatest} we obtain a time complexity bound of $O(Nm^{\omega-1} + nmg)$.}

\changed{ \paragraph*{Comparison of \Cref{alg:hombasis}~with~\Cref{alg:hombasistwo}} For families of complexes  with ${N}^{1-\epsilon} = \omega(m)$  for some $\epsilon >0$, \Cref{alg:hombasis} is faster than~\Cref{alg:hombasistwo}. However, for families of complexes such as triangulations of surfaces with  $N = \Theta(m)$, \Cref{alg:hombasistwo} is faster than \Cref{alg:hombasis}.} 

\changed{\paragraph*{Comparison of \Cref{alg:hombasis} and  \Cref{alg:hombasistwo} with Dey et al.'s algorithm}
For surfaces with $g = \Theta(m)$, \Cref{alg:hombasis,alg:hombasistwo} are faster than Dey et al.'s algorithm since Dey et al.'s algorithm takes at least $\Theta(m^{2.5})$ time since $ n = \Omega(m^{0.5})$, while \Cref{alg:hombasis,alg:hombasistwo} run in $\tilde{O}(m^\omega)$ and $O(m^\omega)$ time, respectively.} 

\changed{For dense simplicial complexes with $n$ vertices, $\Theta(n^2)$ edges and $\Theta(n^3)$ $2$-simplices, \Cref{alg:hombasis} is faster than Dey et al.'s algorithm since \Cref{alg:hombasis} runs in $\tilde{O}(n^{2\omega})$ time, whereas Dey et al.'s algorithm runs in $O(n^{2\omega+1})$ time.}

\changed{For surfaces with bounded $g$,
Dey et al.’s algorithm is slightly faster than \Cref{alg:hombasis} since Dey et al.’s algorithm runs in $O(m^\omega)$ time, whereas \Cref{alg:hombasis} runs in $\tilde{O}(m^\omega)$ time.}

\changed{Finally, it is easy to check that \Cref{alg:hombasistwo} is asymptotically always at least as fast Dey et al.’s algorithm, whereas in some important cases (such as surfaces with $g = \Theta(m)$ as discussed above), it is indeed much faster.
}

\changed{\paragraph*{Comparison of \Cref{alg:hombasisthree} with \Cref{alg:hombasis} and  \Cref{alg:hombasistwo}}
When $N = O(m)$ and $g$ is bounded, \Cref{alg:hombasisthree} is faster than \Cref{alg:hombasis,alg:hombasistwo}. On the other hand when $g = \Theta(m)$, \Cref{alg:hombasis,alg:hombasistwo} are faster than \Cref{alg:hombasisthree}. 
}

\changed{\paragraph*{Comparison of \Cref{alg:hombasisthree} with Dey et al.'s algorithm}
For surfaces with bounded $g$, \Cref{alg:hombasisthree} is faster than Dey et al's algorithm since \Cref{alg:hombasisthree} runs in $O(m^2)$ time whereas Dey et al's algorithm runs in $O(m^\omega)$ time. On the other hand, for surfaces with $g = \Theta(m)$,  Dey et al.'s algorithm is faster than \Cref{alg:hombasisthree} since \Cref{alg:hombasisthree} runs in $O(m^4)$ time whereas Dey et al.'s algorithm runs in $O(m^\omega)$ time. 
}

\cancel{

Given a matrix $\boldB$, we denote its pseudoinverse by $\boldB^{+}$. 

\noindent
{\bf Algorithm} {\sc OutSense} ($\mathsf{K}_\bullet$)\label{algorithm:cuppers}
\begin{itemize}
    \item Step 1. Compute the set of tight cycles of $\complex_1$ in matrix $\boldT$.
     \item Step 2. Using Wiedemann's algorithm compute $z_1 = m - \rk (\partial_1)$. Next, using Wiedemann's algorithm compute the $b_1 = \rk (\partial_2)$.
     \item Step 3.  Then, $g \gets  z_1 - b_1$.
    \item Step 4.  Initialize \quad $\ell \gets 1$; \quad $r\gets m-n+1$.

\end{itemize}

\begin{lemma} {\bf Algorithm} {\sc OutSense} runs in $\tilde{O}(m^2 g)$ time.
\end{lemma}

If $\boldw^T \in \mathcal{R}(\boldB)$, then $\exists \boldx^T$ such that
\begin{align*}
\boldw^T &= \boldx^T \boldB    \\
         &= \boldx^T \boldB \boldB^+ \boldB \\
         &= \boldw^T \boldB^+ \boldB
\end{align*}
On the other hand, if $\boldw^T = \boldw^T \boldB^+ \boldB$, then clearly $\boldw^T \in \mathcal{R}(\boldB)$.
Moreover, $\boldw^T = \boldx^T \boldB$ has a solution if and only if $\boldw^T \in \mathcal{R}(\boldB)$. This justifies line 11.
}

\section{Implementation} \label{sec:implement}
While the algorithms described in the previous section have good worst case runtime bounds, they are not amenable to a simple and efficient implementation. \changed{We use  ideas from \Cref{alg:hombasistwo} to implement an algorithm that exhibits good performance in practice by leveraging existing state-of-the-art software for matrix reduction and minimum cycle basis computation.} We begin by describing the data representations for input/output and intermediate storage.

\subsection{Input/output format}\label{sec:ip-op-format}
Geometric simplicial complex input are expected to be in the \texttt{OFF} \cite{OFFFormat} file format. The current implementation assumes that the complex is embedded in $\mathbb{R}^{3}$ but this may be extended to high dimensional Euclidean spaces. The weight of an edge in the complex is assumed to be given by the Euclidean distance between its end point vertices.
A general simplicial complex input may be specified in a simple text file that stores the 2-skeleton of the complex. The text file contains the number of vertices, edges, and triangles, followed by a list of weighted edges and finally a list of triangles of the complex, all in ASCII format. 
An edge is represented as a 3-tuple $\{ i,j,w\}$ in a separate line where $i,j$ are the indices of its end point vertices and $w$ is the edge weight. A triangle is represented as a 3-tuple $\{ i,j,k\}$ in a separate line where $i,j,k$ are the indices of the vertices of the triangle. 

The output is available in a single text file \changed{consisting of the betti} number ($\beta_1$) followed by a list of cycles that represent a minimum homology basis. Each cycle is represented as a sequence of vertex indices.

\subsection{Internal data representation}
We maintain the following in-memory data structures for querying the input simplicial complex and storing the results of the intermediate steps of the algorithm.
\begin{enumerate}
    \item \emph{vertexList:} A list of vertices and their location in $\mathbb{R}^3$ for geometric simplicial complexes. 
    \item \emph{triangles:} A list of triangles of the complex. Each triangle is stored as a 3-tuple representing the index of its three vertices in \emph{vertexList}.
    \item \emph{vPairToE:} Edges of the complex are enumerated. The index \emph{edgeNo} of an edge ranges from $1$ to $m$. \emph{vPairToE} is a map where each key is a pair of vertex indices corresponding to the end points of an edge and the associated value is the \emph{edgeNo} of that edge.
    \item\emph{eToVPair:} The inverse map of \emph{vPairToE}. A map that specifies the pair of vertex indices corresponding to an edge. 
    \item \emph{eToWeightMap:} A map from an edge index to the edge weight.
    \item \emph{graph:} The 1-skeleton of the input simplicial complex, stored as a Boost adjacency list~\cite{boost}.
\end{enumerate}

\subsection{Algorithm}
We now describe \fastloop, a practical implementation of \Cref{alg:hombasis}.  \fastloop employs alternate algorithms for computing the minimum cycle basis and column rank profile. These algorithms are efficient in practice, amenable to parallel computation, and their implementation is made available within reliable software libraries. Below, we provide an overview of the main steps of \fastloop. 
\begin{enumerate}
    \item  Load the input simplicial complex and populate the key data structures described in the previous section.
    \item Compute a minimum cycle basis (see \Cref{Compute-MCB}).
    \item Assemble the matrix $\newmatrix$, the boundary matrix prepended to the minimum cycle basis (\Cref{alg:hombasistwo})
    \item Employ a column reduction algorithm that adds columns from left to right to compute the column rank profile (see \Cref{compute-MHB}).
\end{enumerate}

\subsubsection{Computing a minimum cycle basis}\label{Compute-MCB}
We use parmcb, the Parallel Minimum Cycle Basis library~\cite{parmcb}, to compute a minimum cycle basis of the 1-skeleton of the input simplicial complex. The parmcb library implements a suite of algorithms that are broadly based on De Pina's algorithm~\Cref{alg:depina} but differ in the step that computes a minimum weight cycle that is required in Step 3 of the algorithm.
It supports parallel execution using MPI and Intel TBB. The input graph is represented using the graph data structure from the Boost library. For our purpose, we build a Boost library adjacency list representation of the 1-skeleton of the 2-complex~(\emph{graph}). The minimum cycle basis is reported as a collection of cycles, where each cycle is represented as a list of edges.

\subsubsection{Computing the minimum homology basis}\label{compute-MHB}
Any matrix reduction algorithm that adds columns from left to right can be used to compute the column rank profile. Specifically, the indices of the non-zero columns at the end of such a  reduction procedure gives the column rank profile. We use the standard reduction algorithm in the PHAT library~\cite{PHAT}. PHAT is a library of matrix reduction algorithms~\cite{Bauer2017Phat} implemented in C++ for computing barcodes in persistent homology. We recall the standard reduction algorithm in \Cref{alg:standardReduction}. \Cref{thm:totallength} implies that the matrix $\newmatrix$ is sparse. This motivates the use of PHAT, which is optimized to exploit the sparsity of the underlying matrix. PHAT supports multiple sparse representations of the matrix. We choose the bit\_tree representation~\cite[Section 4]{Bauer2017Phat}, where each column is represented as a balanced binary tree of row indices that contain a 1 in that column. The matrix is maintained as a list (\emph{vector}) of such balanced binary trees. 

\begin{algorithm}[h]
\caption{Standard reduction algorithm for matrix reduction~\cite{Bauer2017Phat}}\label{alg:standardReduction}
\begin{algorithmic}[1]
\vspace{1.5mm}

\State{Input: A 0-1 matrix $\partial$ with $m$ columns.}
\State{$low(j)$ denotes the row index of the lowest 1 in $\partial$, it is undefined if column $j$ contains only zeros}.
\For{$j = 1$ to $m$ }
\While{there exists $j_0 < j$ with $low(j_0) = low(j)$}
\State{Add column $j_0$ to $j$}
\EndWhile
\EndFor
\end{algorithmic}
\end{algorithm}

\section{Experimental results} \label{sec:experiment}
We now describe results from our computational experiments on various real world and synthetic datasets. The experiments serve two primary purposes. First, they validate the correctness of \fastloop. Second, they reveal the efficiency of the algorithm when measured against the size and type of the input complex and the number of CPU cores deployed. \changed{All experiments were performed on an Intel workstation powered by a Xeon(R) Gold 6230 CPU with 20 cores at 2.10 GHz and 384~GB RAM running Ubuntu Linux}. Parallelization in computing the minimum cycle basis was achieved using Intel Thread Building Blocks~(TBB).

\subsection{Cycle representatives}
The real world datasets include 2D meshes and 3D volume meshes, both manifold and non-manifold. \Cref{table: execution_time} lists all datasets and \Cref{fig:Mhb_examples} shows results on a subset of the datasets. The top two rows in \Cref{fig:Mhb_examples} are polygonal meshes from the Visionair shape repository~\cite{visionair}. These meshes are medium sized datasets consisting of 30000--70000 simplices. 
The accompanying video (Online Resource~1) shows the computed cycle representatives from different view points.

Computational experiments on the real world datasets help demonstrate the practical utility of the algorithm via direct visualization of the loops and tunnels. Our algorithm computes optimal cycle representatives for holes of different sizes. The third and fourth rows in \Cref{fig:Mhb_examples} show data from the PCOD hypothetical zeolite database~\cite{Zeolitedataset}. Zeolite structures are known to contain pores.  Hypothetical structures are generated computationally and their properties are studied to determine if they are similar to existing zeolites with desirable properties. A distance field that captures distance from a point to its nearest atom is computed for each hypothetical zeolite structure using the Zeo++ software~\cite{ZeolitePaper}. The structure of three hypothetical zeolite materials are visualized by rendering the zero isosurface, the preimage of distance value 0.
Zeolites are known to contain pores. Identifying the pores and quantifying their size is an important problem because the pore size determines the chemical properties of the zeolite. Zeolites also have a spatially repeating structure that contributes to a larger value of $\beta_1$. 

\Cref{fig:Mhb_Protein} shows the surface of two protein molecules, 3EAM and 1OED. Both proteins contain a central tunnel and multiple long pores. A subset of the cycles of the minimum 1-homology basis, including the central tunnel, are highlighted in middle column. All the cycles of the  minimum 1-homology basis are highlighted in the right column.

\subsection{Verifying correctness}
Notwithstanding the theoretical correctness of the algorithm, we undertake several measures to ensure correctness of the implementation in \fastloop. The two major components of \fastloop are based on highly optimized, well maintained, stable, and well tested software. Specifically, the library parmcb~\cite{parmcb} is used for computing the minimum cycle basis~(MCB), and PHAT~\cite{PHAT} is used for the reduction step to compute the minimum homology basis~(MHB) from the MCB. We also compare the output of \fastloop and ShortLoop~\cite{Shortloop_Paper,Shortloop} on the real world datasets. The Betti numbers reported by both software match each other, and for a majority of the datasets the MHB reported by both software also match. However, we found a few examples where the weight of the MHB computed by \fastloop  differed from that reported by ShortLoop. In order to explain this discrepancy, we performed a few sanity checks on the outputs of Shortloop and \fastloop. In particular, we checked if the loops reported by the software are indeed non-bounding and independent. We observed that a few cycles reported by ShortLoop fail the independence check, specifically in the cases when ShortLoop reports a smaller weight basis when compared to \fastloop. This discrepancy can also be detected visually in some instances. \Cref{fig:Mhb_discrepancy} shows an example where the 
\changed{ collection of basis cycles}  reported by ShortLoop (red) misses the central hole of the wheel. The developers of ShortLoop have been informed about this issue.

\begin{figure*}[!htb]
\centering
\begin{tabular}{cc}
\includegraphics[height=1.7in]{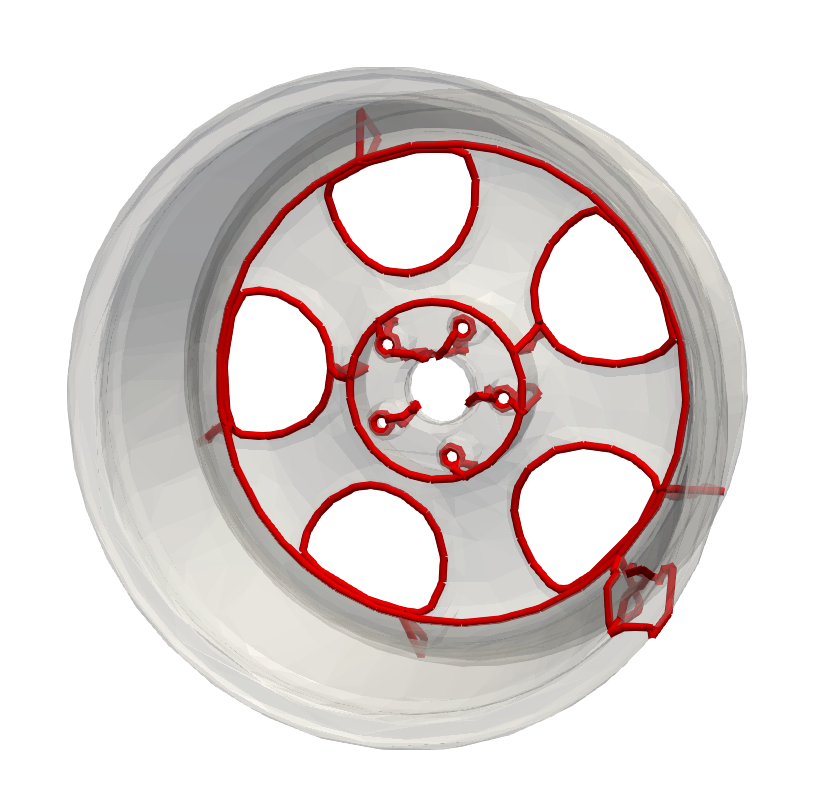} &
\includegraphics[height=1.7in]{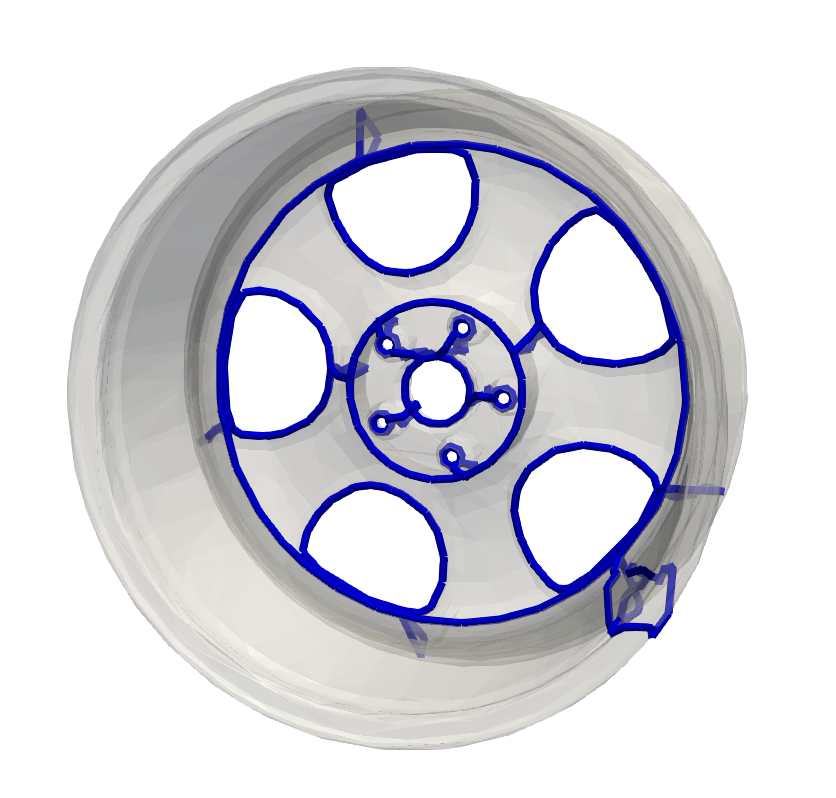} \\
ShortLoop & \fastloop \\
\end{tabular}
\caption{ShortLoop misses the central hole (left) whereas \fastloop correctly identifies the cycle representing the hole.}
    \label{fig:Mhb_discrepancy}
\end{figure*}

\begin{figure*}[!htb]
\centering
\begin{tabular}{ccc}
\includegraphics[height=1.55in]{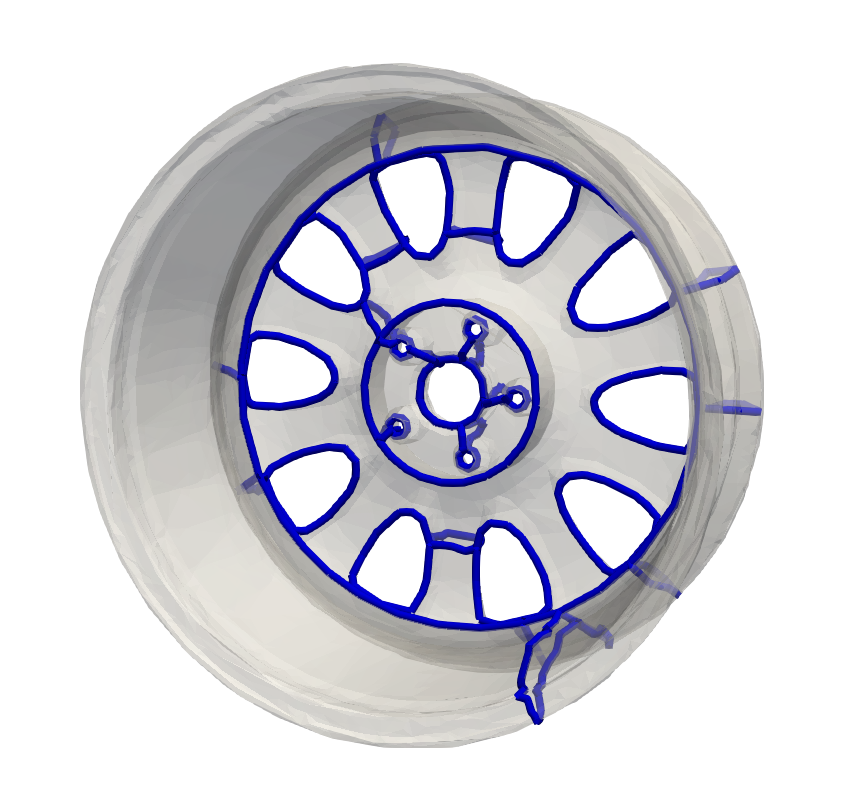}&
\includegraphics[height=1.55in]{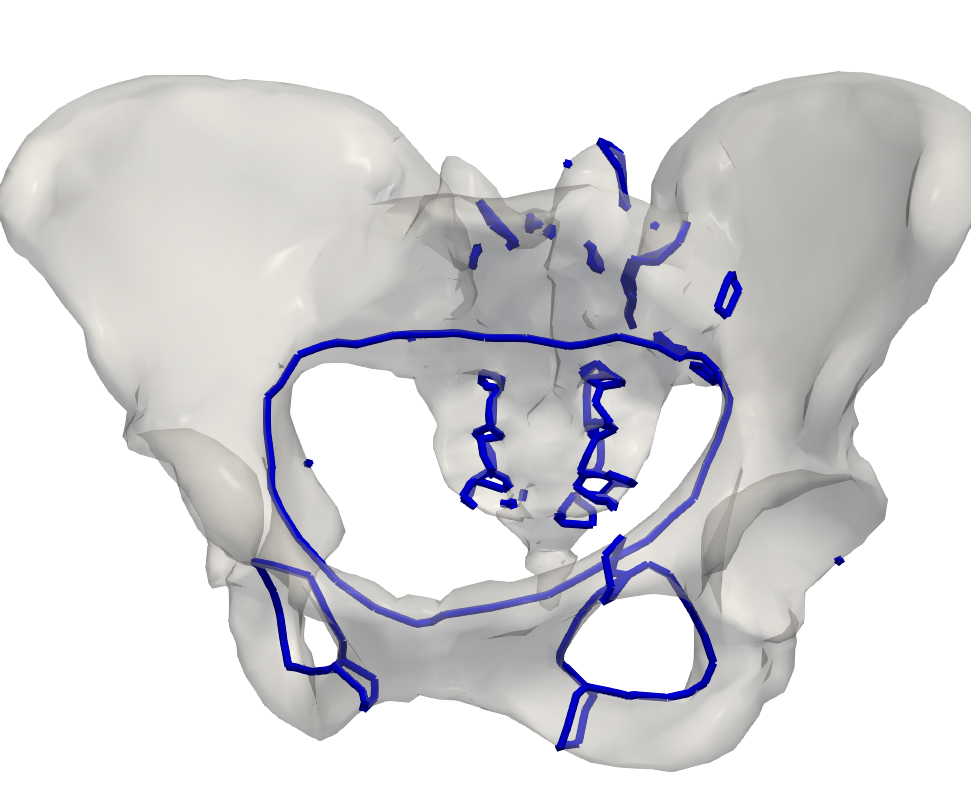}&
\includegraphics[height=1.55in]{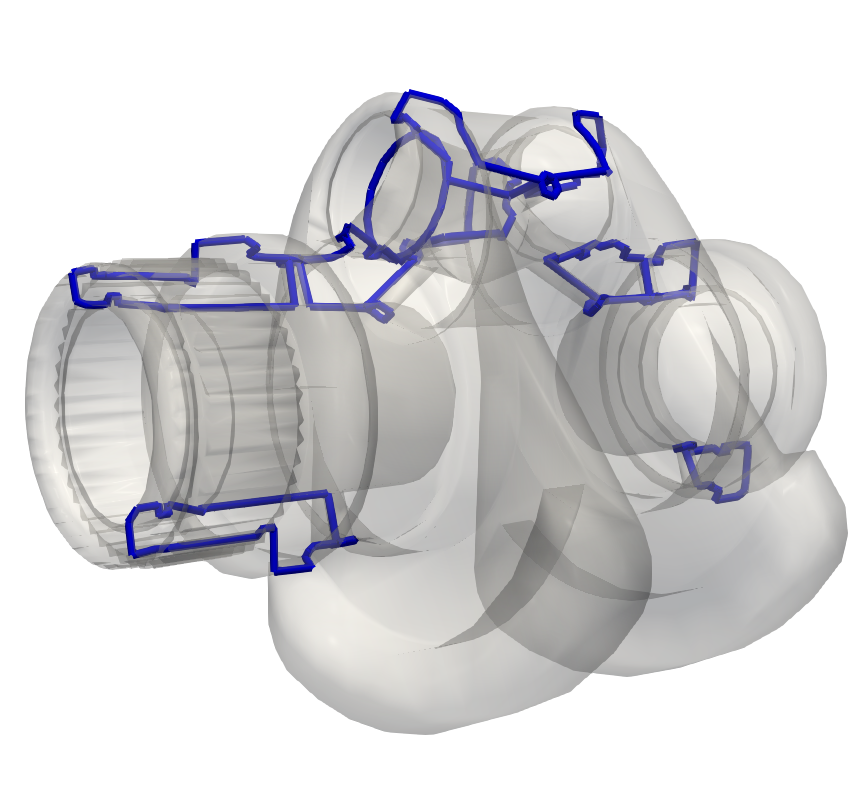} \\
Wheel-1($\beta_1$=36) & Hip Joint($\beta_1$=46) & Crank($\beta_1$=18) \\

\includegraphics[height=1.55in]{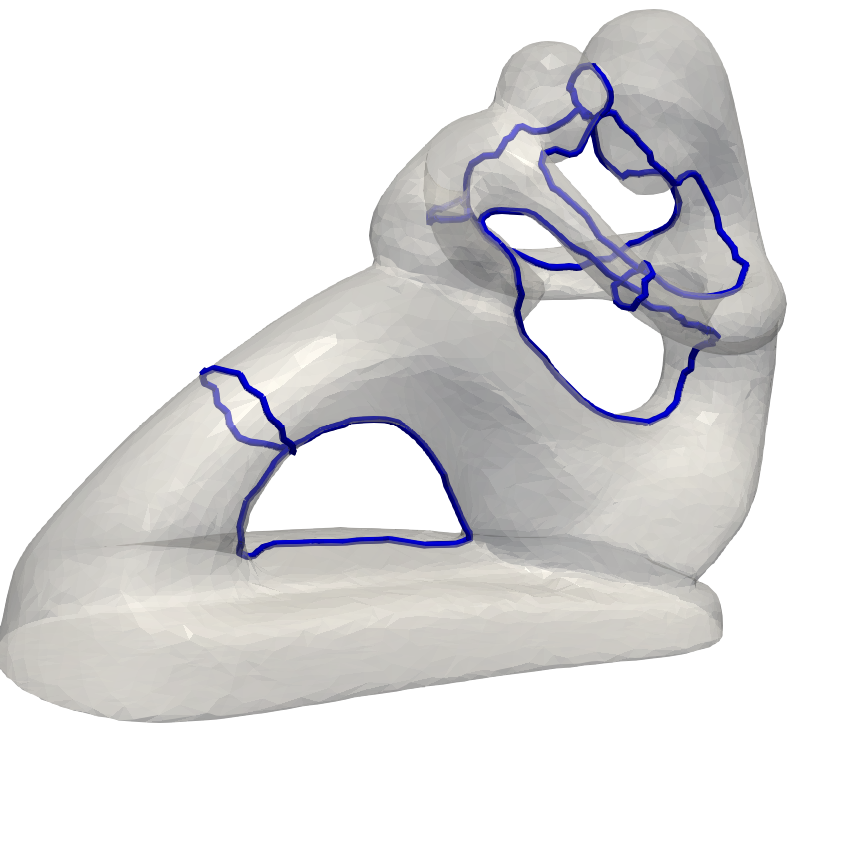} &
\includegraphics[height=1.6in]{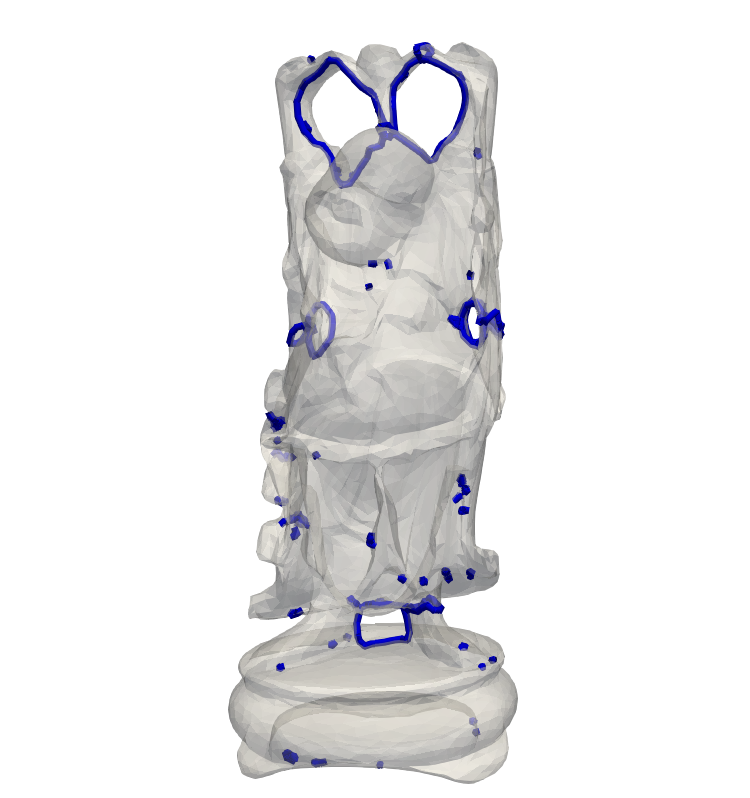} &
\includegraphics[height=1.6in]{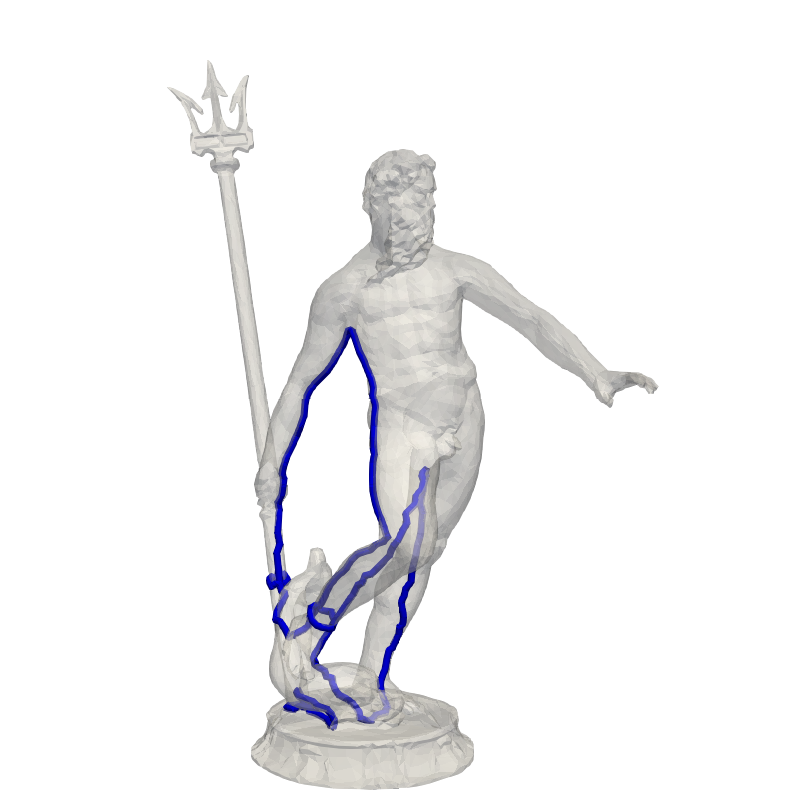} \\

Mother-Child($\beta_1$=8) & Laughing Buddha($\beta_1$=208) & Neptune($\beta_1$=6) \\

\includegraphics[height=1.55in]{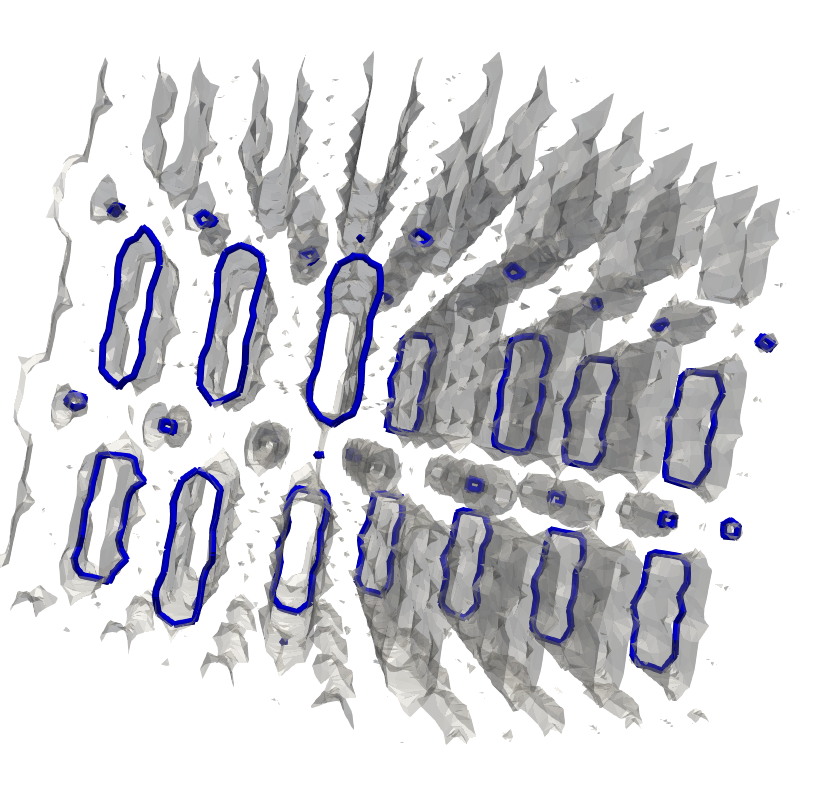} &
\includegraphics[height=1.55in]{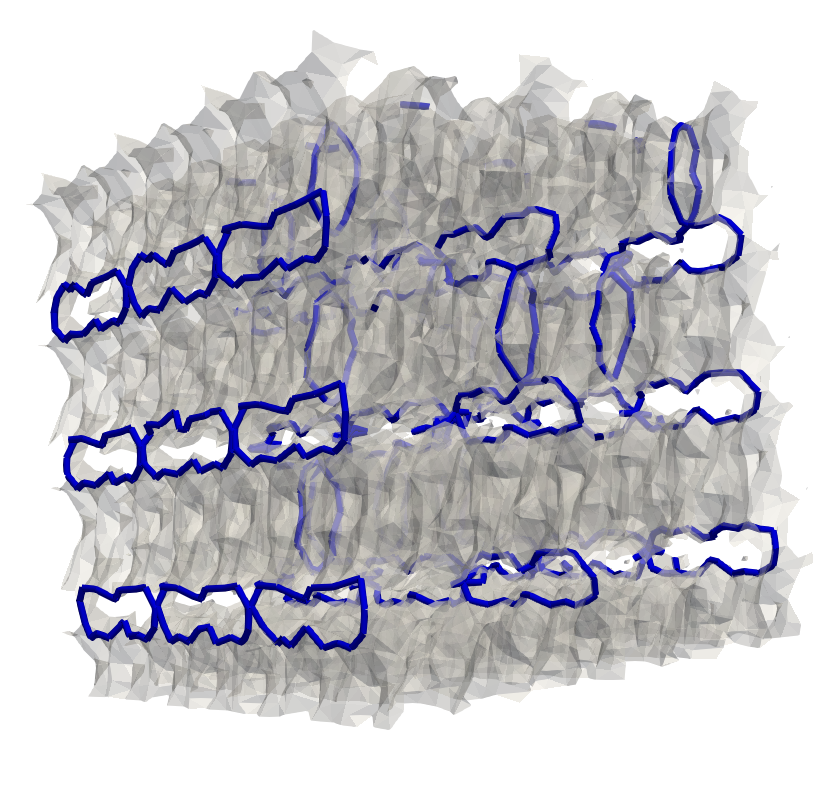} &
\includegraphics[height=1.55in]{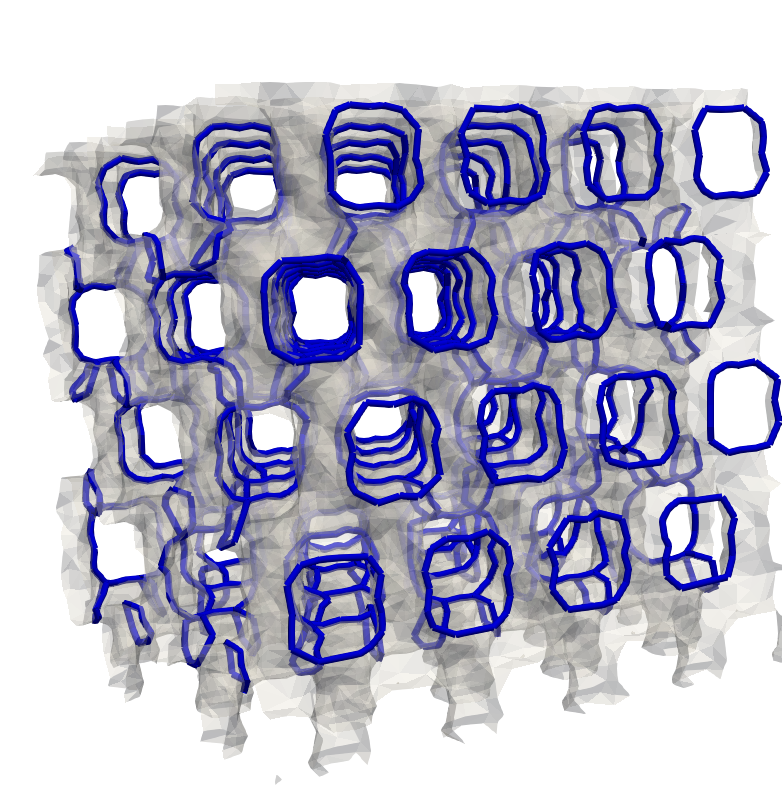} \\
Zeolite-1($\beta_1$=54) & Zeolite-2($\beta_1$=96) & Zeolite-3($\beta_1$=108) \\
\includegraphics[height=1.55in]{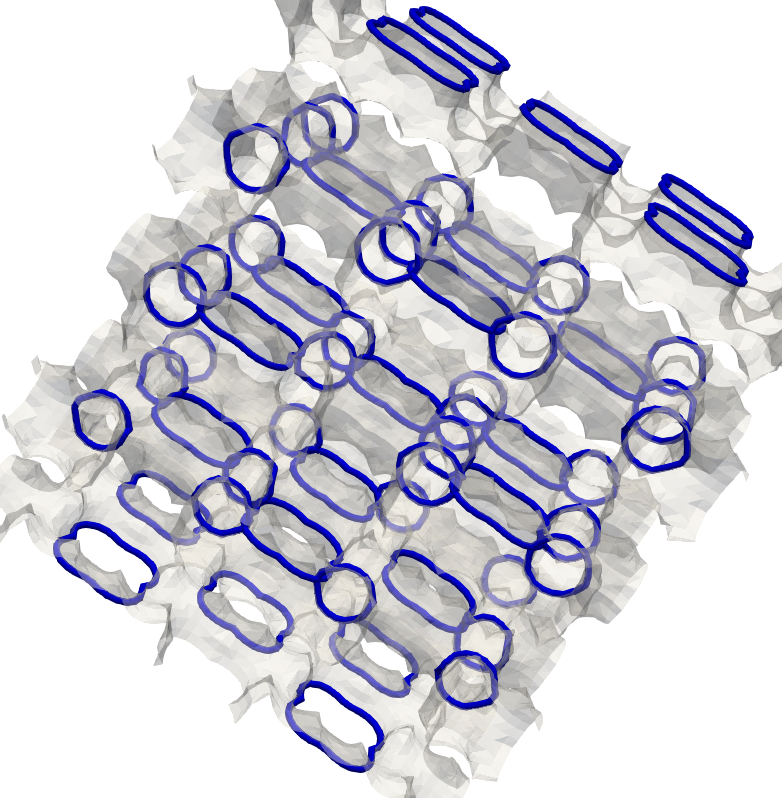} &
\includegraphics[height=1.55in]{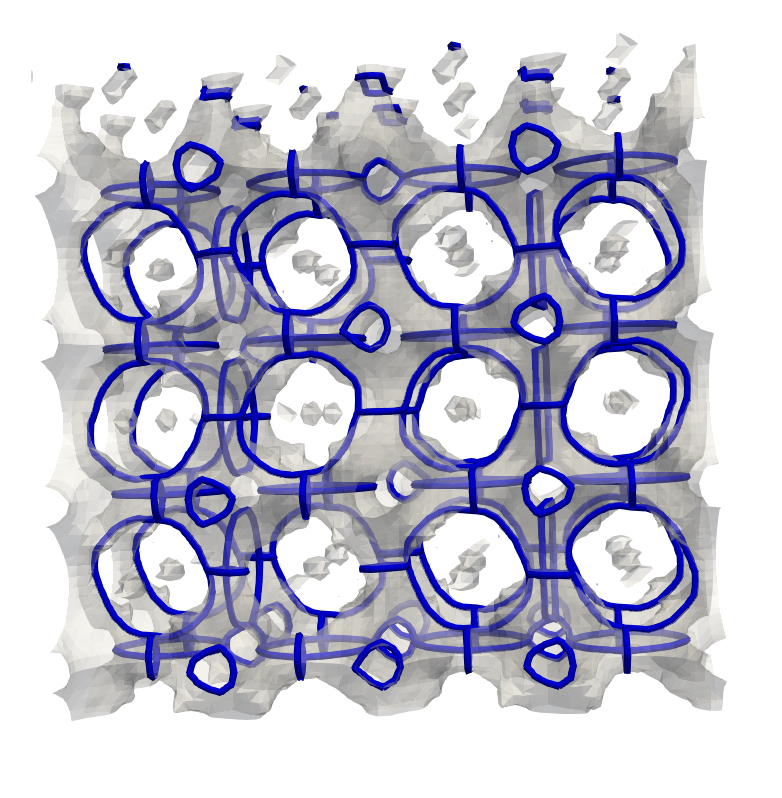} &
\includegraphics[height=1.55in]{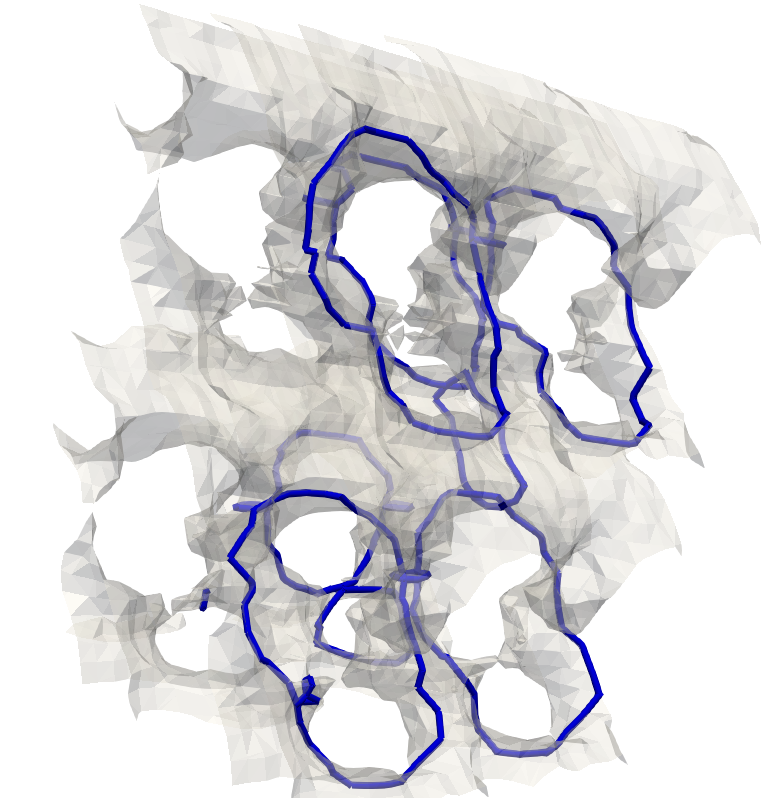} \\
Zeolite-4($\beta_1$=124) & Zeolite-5($\beta_1$=112) & Zeolite-6($\beta_1$=8) \\
\end{tabular}
\caption{Minimum 1-homology basis (blue) computed on various 2D and 3D datasets.}
    \label{fig:Mhb_examples}
\end{figure*}

\begin{figure*}[!htb]
\centering
\begin{tabular}{ccc}
\includegraphics[height=1.6in]{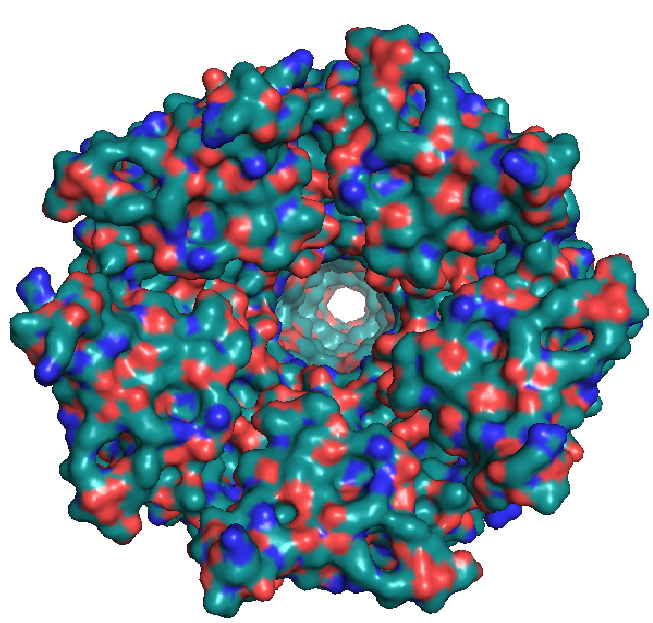}&
\includegraphics[height=1.6in]{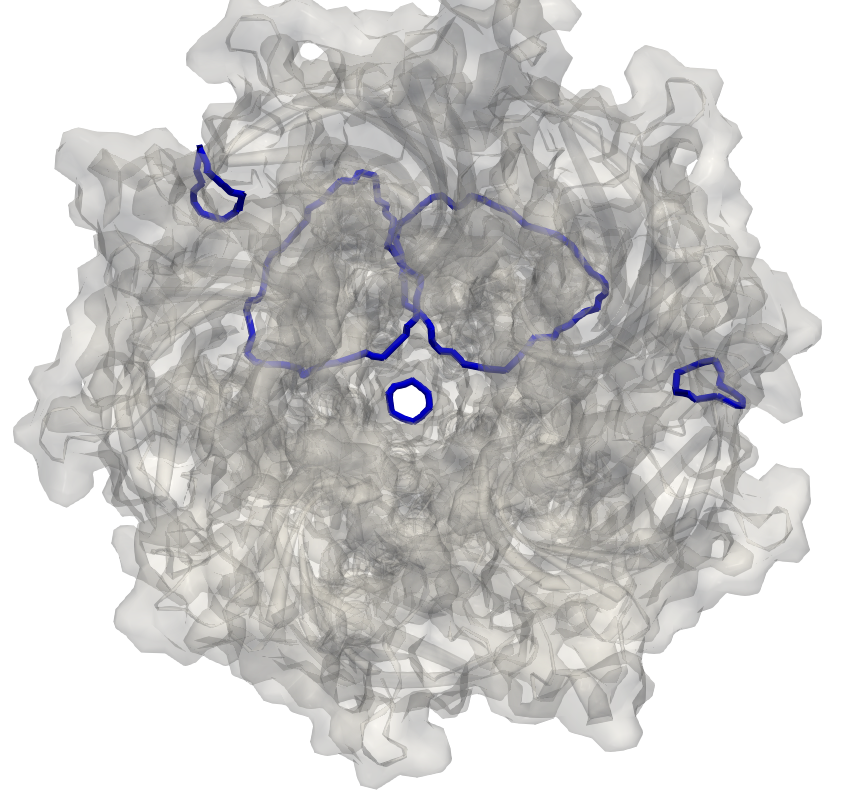}&
\includegraphics[height=1.6in]{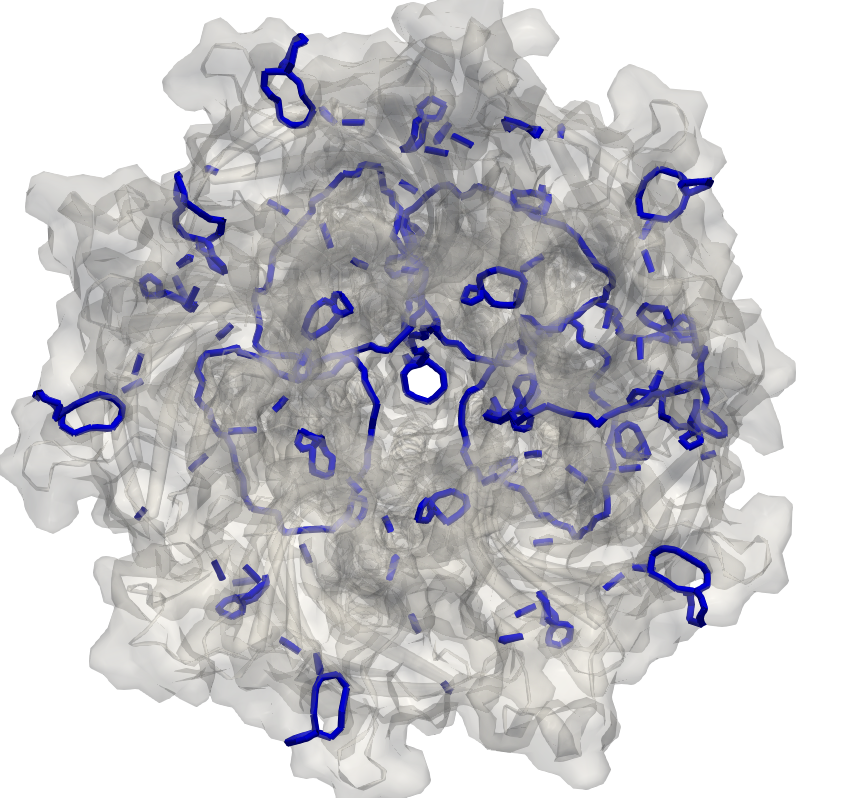}\\
3EAM Molecule & Subset of MHB with central cavity & MHB\\
\includegraphics[height=1.6in]{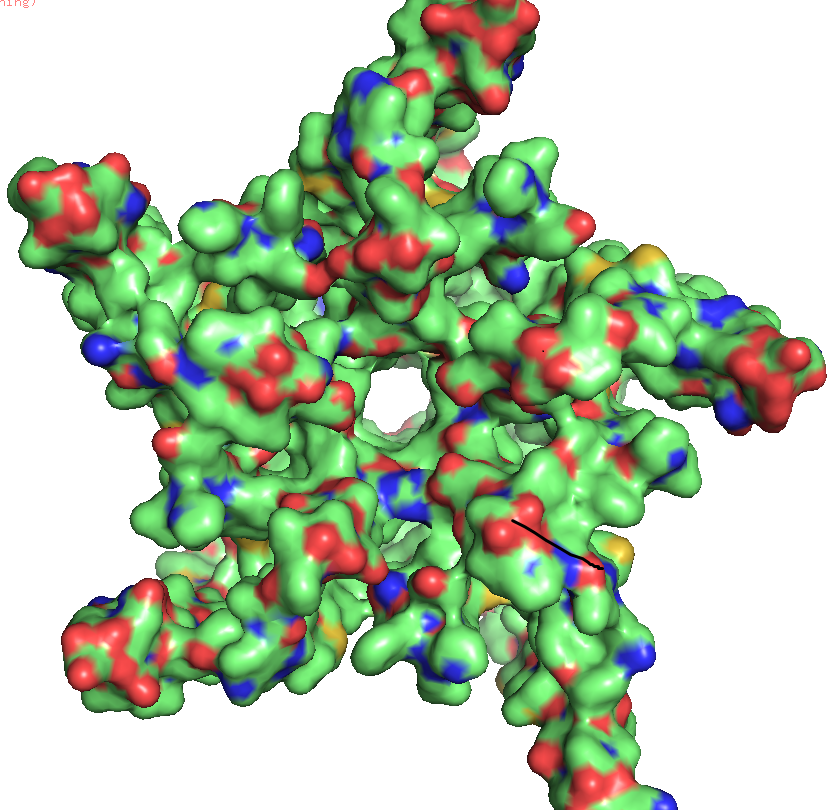} &
\includegraphics[height=1.6in]{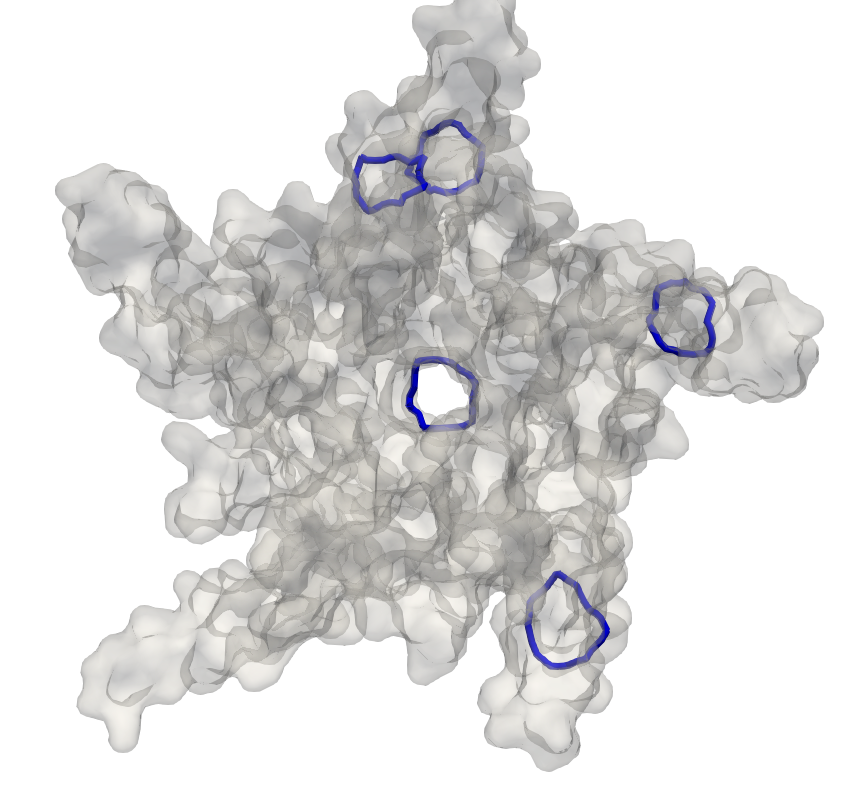} &
\includegraphics[height=1.6in]{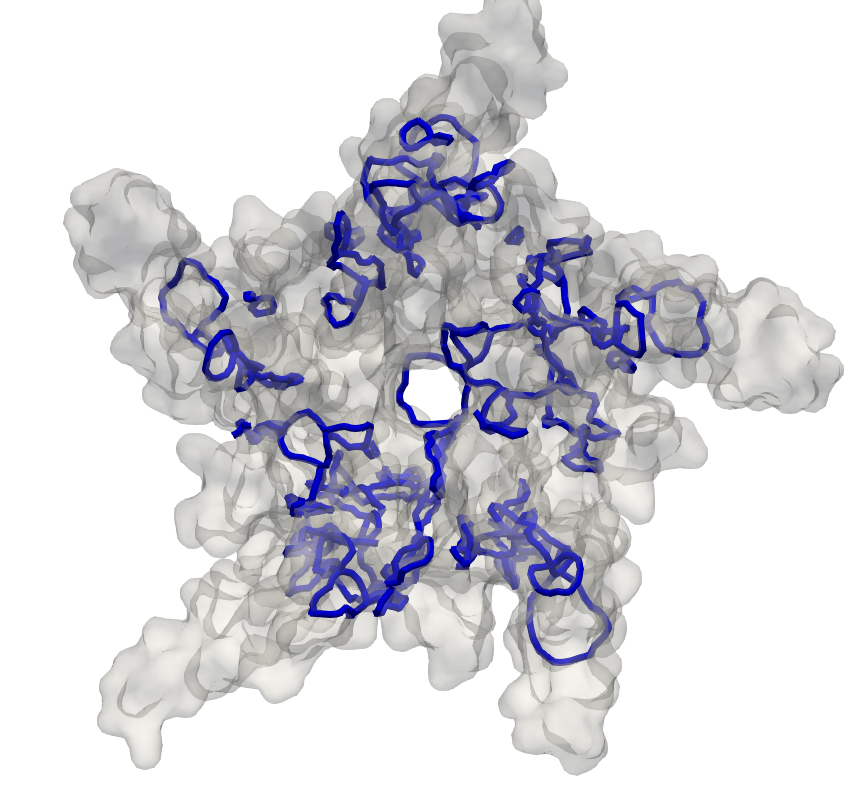} \\
10ED Molecule & Subset of MHB with central cavity & MHB\\
\end{tabular}
\caption{Minimum homology basis computed for two membrane proteins. Left: A rendering of the molecular surface. Right: The minimum homology basis. Middle: A subset of cycles in the basis that represent the central tunnel and some of the larger pores in the protein.}
    \label{fig:Mhb_Protein}
\end{figure*}

\subsection{Synthetic data}
We also present results of experiments on synthetic datasets, consisting of two classes of random complexes, with the aim of studying the scaling behavior of our algorithm. The first class is random clique complexes, a well-studied class of random complexes. Let $n$ be the number of vertices in the complex and let $p$ be a probability parameter. A two dimensional clique complex $C(n,p)$ is constructed as follows: an edge $\{i,j\}$ is chosen in $C(n,p)$ with probability $p$, and a $2$-simplex is included in $C(n,p)$ if all of its edges belong to $C(n,p)$. 

A second class of 2-complexes are random triangle complexes $R(n,p)$, which are constructed as follows.
 Let $n$ be the number of vertices and $p$ be a probability parameter. A triangle $\{i,j,k\}$ is included in $R(n,p)$ with probability $p$. When the triangle $\{i,j,k\}$ is included, all edges on its boundary are also included in $R(n,p)$. 


\subsection{Runtime and scaling}
\label{sec:runtime_comparison}
\Cref{table: execution_time} reports runtimes for both real world and synthetic data. We note that, in general, the runtimes are short for medium sized datasets. The table also presents a comparison against ShortLoop. We observe a significant improvement in execution time over ShortLoop, with a speedup of 1-2 orders of magnitude in some cases. 
ShortLoop does not terminate within a reasonable time frame (~1 day) for the synthetic data. So, we omit a comparison on these random complexes.

 The synthetic datasets help study the scaling behavior of \fastloop. We choose three probability parameters for each class of random complexes, ($0.025, 0.05, 0.075$) for random clique complexes and ($5 \times 10^-5, 10^{-4}, 2 \times 10^{-4}$) for random triangle complexes. We fix value of the probability parameter $p$ and study the runtime performance against increasing size of the complex. 
\Cref{plot:total_runtime} shows a log-log plot of the total running time against number of simplices in the input complex. \Cref{plot:total_runtime_edges} shows a log-log plot of the total  running time against the number of edges in the complex. Both plots are linear indicating that the runtime is a power function of the form $ax^b$. We fit a straight line to the points on the plot and compute its slope in order to estimate the exponent $b$ of the power function. 

The value of the exponent for the plot of runtime vs. total simplices in \Cref{plot:total_runtime} is at most $1.8$ for clique complexes and at most $2$ for random triangles complexes. The exponent is at most $2$ for both classes of complexes in the plot of runtime vs. number of edges.
The empirically observed complexity is better than the theoretical worst case runtime complexity of \Cref{alg:hombasistwo}.
\Cref{MCB-size-plot} shows a plot of MCB size, the cardinality of the multi-set of edges in the minimum cycle basis. The MCB is the output of the first step of the algorithm. The scaling behavior of the MCB size is indicative of the rate determining step of the algorithm. Indeed, a comparison between \Cref{MCB-len-time-plot_Rt} and \Cref{plot:total_runtime} shows that the time taken for computing the MCB is several orders of magnitude greater than the subsequent steps, and that it constitutes a large fraction of the overall runtime.

%
\begin{table}[!h]
    \centering
    \begin{tabular}{lrrrrr}
    \toprule
    dataset & \#Vertices & \#Edges & \#Triangles & Time & Time\\
     &  & & & (\fastloop) & (ShortLoop)\\\midrule
    Wheel-1 & 6970 & 21000 & 14000 & 35s & 245s \\
    Wheel-2 & 2476 & 7500 & 5000 & 7s & 80s \\
    Genus3 & 2462 & 7500 & 5000 & 7s & 84s \\
    Mother-Child & 6494 & 19500 & 13000 & 34s & 40s \\
    Neptune & 6246 & 18750 & 12500 & 30s & 30s \\
    Zeolite-1 & 17275 & 46084 & 30000 & 125s & 80s \\
    Zeolite-2 & 13410 & 38200 & 25000 & 90s & $>$ 2 hr \\
    Zeolite-3 & 12962 & 38116 & 24999 & 100s & $>$ 2 hr \\
    Zeolite-4 & 7629 & 21741 & 14128 & 300s &  $>$ 2 hr \\
    Zeolite-5 & 24643 & 72075 & 47435 & 360s &  $>$ 2 hr \\
    Zeolite-6 & 24573 & 71587 & 46963 & 300s & $>$ 2 hr\\
    Protein-1(3EAM) & 30374 & 90163 & 60000 & 480s & $>$ 2 hr\\
    Protein-2(1OED) & 29846 & 89949 & 59999 & 540s & $>$ 2 hr\\
    \botrule
    \end{tabular}
    \caption{Runtime analysis. \fastloop outperforms ShortLoop in terms of total runtime with a speedup of 10$\times$ or more in many instances.}
    \label{table: execution_time}
\end{table}


\begin{figure}[!h]
\centering
\includegraphics[width=0.8\linewidth]{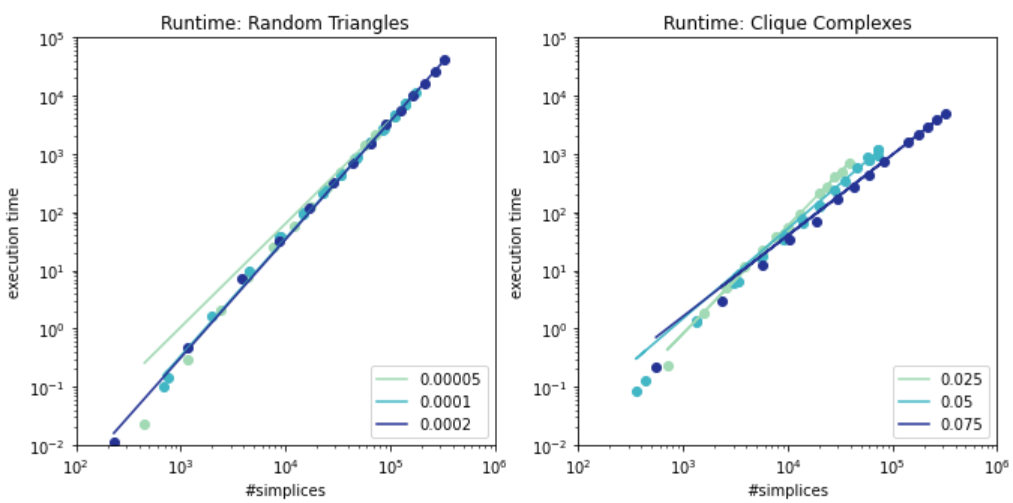}
\caption{Scaling study. A log-log plot of running time vs. total number of simplices in the input complex. The running time scales at most quadratically with the number of simplices.}
    \label{plot:total_runtime}
\end{figure}
\begin{figure}[!h]
\centering
\includegraphics[width=0.8\linewidth]{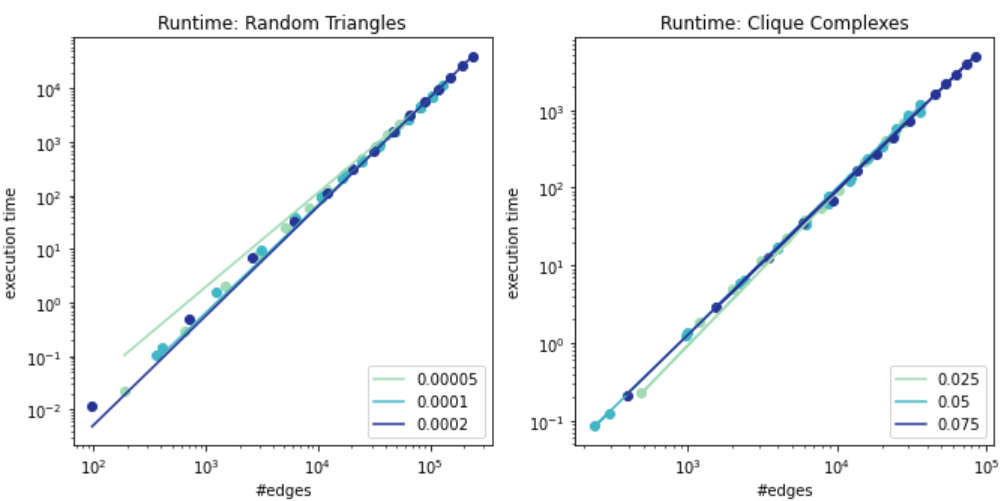}
\caption{Scaling study. A log-log plot of running time vs. number of edges in the input complex. The running time scales quadratically.}
    \label{plot:total_runtime_edges}
\end{figure}
\begin{figure}[!h]
\centering
\includegraphics[width=0.8\linewidth]{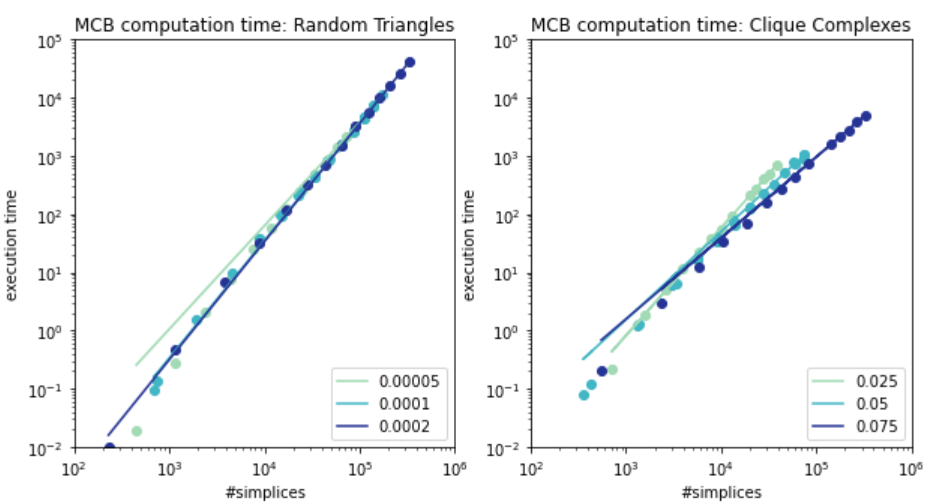}
\caption{A log-log plot of the time taken to compute the MCB vs. number of simplices in the input indicates that this step scales quadratically with the input.}
    \label{MCB-len-time-plot_Rt}
\end{figure}
\begin{figure}[!h]
\centering
\includegraphics[width=0.8\linewidth]{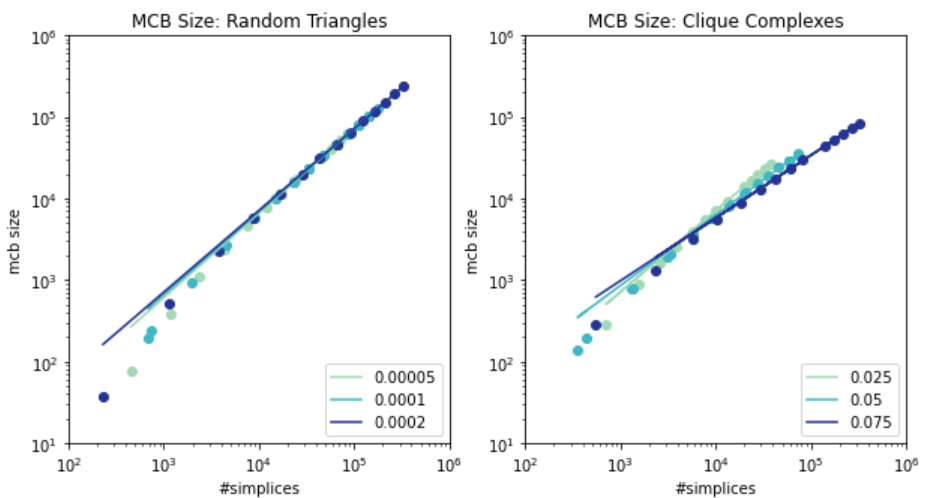}
\caption{The number of edges in the MCB increases quadratically with the size of the input, which explains why computing the MCB is the rate determining step.}
    \label{MCB-size-plot}
\end{figure}





\section{Discussion}
In this paper, we show that questions about minimum cycle basis and minimum homology basis can be naturally recast into the problem of computing rank profiles of matrices, leading to fast algorithms with simple and elegant high-level descriptions. 
The column rank profile (or the earliest basis) of a matrix has previously been used to compute the minimum homology basis of a simplicial complex~\cite{Busaryev,DeyLatest}. Such a greedy approach that picks, at each step, an independent cycle of the smallest index, works because of the matroid structure of homology bases and cycle bases. The novelty of our approach is the observation that independence can be efficiently checked owing to the sparsity of the matrices comprising of candidate cycles.

\changed{In \Cref{sec:outsense}, we describe a randomized $g$-sensitive algorithm for computing minimum homology basis that runs in nearly quadratic time when $g = O(1)$. We believe this is the first such algorithm for this problem for general complexes. }

Experiments on real-world data sets reveal how \fastloop captures the one dimensional \enquote{holes} that may be useful in a variety of practical applications.
\fastloop computes the minimum homology basis of a variety of medium to large sized real-world data sets within a few minutes and consistently outperforms  the state of the art implementation, ShortLoop.
The algorithm as well as the software consist of two major components, namely computing \changed{ a }minimum cycle basis followed by a reduction step. The two components are based on independent algorithms, which may be replaced \changed{in the future} with alternate methods to achieve better theoretical complexity or practical running times.

\backmatter

\bmhead{Supplementary information}
The accompanying video (Online Resource~1) shows the computed cycle representatives from different view points.

\bmhead{Acknowledgements}
This work is partially supported by the PMRF, MoE Govt. of India, an NSF grant CCF 2049010, and a SERB grant CRG/2021/005278. 
VN acknowledges support from the Alexander von Humboldt Foundation, and Berlin MATH+ under the Visiting Scholar program. Part of this work was completed when VN was a guest Professor at the Zuse Institute Berlin.

\bibliography{min-hom}

\end{document}